\definecolor{MaterialsCoral}{cmyk}{0, 0.75, 0.5, 0}
\definecolor{MaterialsSky}{cmyk}{0.6, 0, 0, 0}
\definecolor{MaterialsSun}{cmyk}{0, 0.2, 0.6, 0.05}
\definecolor{MaterialsGrass}{cmyk}{0.65, 0, 0.3, 0}
\newcommand{\Xomit}[1]{}
\setlist{leftmargin=*}
\newcommand{\cB}{\mathcal{B}}
\newcommand{\cC}{\mathcal{C}}
\newcommand{\cF}{\mathcal{F}}
\newcommand{\cP}{\mathcal{P}}
\newtheorem{lemma}{Lemma}
\title{An Optimal Multiple-Class Encoding Scheme for a Graph of
  Bounded Hadwiger Number}
\author{Hsueh-I Lu\thanks{hil@csie.ntu.edu.tw. Department of Computer
    Science and Information Engineering, National Taiwan University,
    Taipei, Taiwan. Research supported in part by Grant
    110--2221--E--002--075--MY3 of National Science and Technology
    Council (NSTC).}}
\date{}
\begin{document}
\maketitle

\begin{abstract}
Since Jacobson [FOCS~1989] initiated the investigation of succinct
encodings for various classes of graphs 35 years ago, there has been a
long list of results on balancing the generality of the class, the
encoding and decoding speed, the succinctness of the encoded string,
and the query support.  Let $\cC_n$ denote the set consisting of the
graphs in a class $\cC$ that have at most $n$ vertices.  A class $\cC$
is \emph{nontrivial} if the information-theoretically minimum number
$\lceil\log_2 |\cC_n|\rceil$ of bits to distinguish the members of
$\cC_n$ is $\Omega(n)$.  An encoding scheme based upon a single class
$\cC$ is \emph{$\cC$-optimal} if it takes a graph $G$ of $\cC_n$ and
produces in deterministic linear time an encoded string of at most
$\log_2 |\cC_n|+o(\log_2 |\cC_n|)$ bits from which $G$ can be
recovered in linear time. Despite the extensive efforts in the
literature, trees and general graphs were the only nontrivial classes
$\cC$ admitting $\cC$-optimal encoding schemes that support the degree
query in $O(1)$ time.

Basing an encoding scheme upon a single class ignores the possibility
of a shorter encoded string using additional properties of the graph
input.  To leverage the inherent structures of individual graphs, we
propose to base an encoding scheme upon of multiple classes: An
encoding scheme based upon a family $\cF$ of classes, accepting all
graphs in $\bigcup \cF$, is \emph{$\cF$-optimal} if it is
$\cC$-optimal for each $\cC\in \cF$.  Although an $\cF$-optimal
encoding scheme is by definition $\cC$-optimal for each $\cC\in \cF$,
having a $\cC$-optimal encoding scheme for each $\cC\in\cF$ does not
guarantee an $\cF$-optimal encoding scheme.  Under this more stringent
optimality criterion, we present an $\cF$-optimal encoding scheme
$A^*$ for a family $\cF$ of an infinite number of classes such that
$\bigcup\cF$ comprises all graphs of bounded Hadwiger numbers.
Precisely, $\cF$ consists of the nontrivial quasi-monotone classes of
$k$-clique-minor-free graphs for each positive integer $k$.  Just to
name a few, examples of monotone members of $\cF$ are graphs with
genus at most $2$, $3$-colorable plane graphs, graphs of page numbers
at most $4$ and girths at least $5$, $6$-clique-minor-free graphs,
forests with diameters at most $7$, graphs having no minor that is an
$8$-cycle, and each nontrivial minor-closed class of graphs other than
the class of all graphs.  Examples of non-monotone members of $\cF$
are trees, floor-plans, triconnected planar graphs, and plane
triangulations.

Our $\cF$-optimal encoding scheme $A^*$ supports queries of degree,
adjacency, neighbor-listing, and bounded-distance shortest path in
$O(1)$ time per output.  Hence, we significantly broaden the graph
classes admitting optimal encoding schemes that also efficiently
support fundamental queries. Our $A^*$ does not rely on any
recognition algorithm or any explicit or implicit knowledge of the
exact or approximate values of $\lceil \log |\cC_n|\rceil$ for the
infinite number of classes $\cC\in \cF$. $A^*$ needs no given
embedding of the input graph.  However, $A^*$ accepts additional
information like a genus-$O(1)$ embedding, an $O(1)$-coloring, or an
$O(1)$-orientation for the input graph to be decoded from the encoded
string and answered by the query algorithms of $A^*$.
\end{abstract}

\section{Introduction}
\label{section:section1}
\paragraph{Single-class encoding schemes}
Jacobson's~\cite{Jacobson89} initial investigation 35 years ago into
compact representations of graphs has led to extensive
research~\cite{MunroN16}. Let $\cC$ be a class of graphs. An
\emph{encoding scheme} for $\cC$ consists of an encoding algorithm, a
decoding algorithm, and possibly some query algorithms.  The encoding
algorithm takes an $n$-vertex graph $G$ from $\cC$ and produces an
\emph{encoded string} $X$ from which $G$ can be recovered by the
decoding algorithm.  Let set $\cC_n$ consist of the graphs in $\cC$
having at most $n$ vertices.  Let $|S|$ denote the cardinality of set
$S$.  All logarithms throughout the paper are base $2$.  The
succinctness of $X$ is determined by comparing its bit count to the
minimum number $\lceil\log |\cC_n|\rceil$ of bits needed to
distinguish the graphs in $\cC_n$.  The encoded string is
\emph{$\cC$-succinct} if it has at most $f(n)+o(f(n))$ bits for each
continuous super-additive function $f$ with $\log |\cC_n|\leq
f(n)+o(f(n))$~\cite{HeKL00}, which essentially means that the encoding
size is bounded by $\log |\cC_n|+o(\log |\cC_n|)$.  An encoding scheme
is \emph{$\cC$-optimal} if its encoding and decoding algorithms run in
deterministic linear time and the encoded string is $\cC$-succinct.
For example, if $\cC$ is the class of simple undirected graphs, then
$\log |\cC_n|=0.5n^2-o(n^2)$~\cite{HararyP73} and hence a
$0.5n(n+1)$-bit string representing the symmetric adjacency matrix of
the input graph is the encoded string of a $\cC$-optimal encoding
scheme that supports adjacency query in constant time. If $\cC$ is the
class of rooted ordered trees, then $\log |\cC_n|=2n-O(n)$ and hence a
$2(n-1)$-bit string representing the depth-first traversal of the
input tree is the encoded string of a $\cC$-optimal encoding scheme
which can take linear time to find the parent of a vertex in the
tree~\cite{Jacobson89}.  A class $\cC$ is \emph{nontrivial} if $\log
|\cC_n|=\Omega(n)$.  Figure~\ref{figure:figure1} displays a long list
of results to balance several factors: the generality of class $\cC$,
the speed of encoding and decoding, the ability of $X$ to support
queries, and the succinctness of $X$.  Despite these extensive efforts
in the literature, trees~(see,
e.\,g.,~\cite{LuY08,MunroR01,NavarroS16}) and general graphs formed
the previously only known nontrivial classes $\cC$ of graphs admitting
$\cC$-optimal encoding schemes that also support constant-time degree
queries.

\begin{figure}\begin{center}
\scalebox{0.76}{
\renewcommand{\arraystretch}{1.2}
\begin{tabular}{|c|c|c|c|c|c|c|c|c|c|c|}
				\hline
				\multirow{2}{*}{graph classes}&\multirow{2}{*}{bits}&\multicolumn{6}{|c|}{time}&\multirow{2}{*}{references}\\
				\cline{3-8}
				&  & $\textit{adj}(u,v)$ & $\textit{nbrs}(u)$  & $\textit{deg}(u)$ & $\textit{near}(u,v)$ & encode & decode & \\
				\hline
				\multirow{7}{*}{planar} & 
				$O(n)$ & $\diamond$ & $\diamond$ & $\diamond$ & $\diamond$ & $\star$ & $\star$&\cite{BonichonGH11,HeKL99,KeelerW95,Turan84} \\
				\cline{2-9} & $O(n)$ & $O(\log n)$ & $O(d\log n)$ & $O(\log n)$ & $\diamond$ & $\bullet$ & $\bullet$ & \cite{Jacobson89} \\
				\cline{2-9} & $O(n)$ & $\star$ & $\star$ & $\star$ & $\diamond$ & $\bullet$ & $\bullet$ &\cite{GavoilleH08} \\
				\cline{2-9} & $O(n)$ & $\star$ & $\star$ & $\star$ & $\diamond$ & $\star$ & $\star$ & \cite{ChiangLL05,ChuangGHKL98,MunroR01} \\
				\cline{2-9} & $O(n)$ & $\star$ & $\star$ & $\star$ & $\diamond$ &$O(n\log n)$ & $\star$ & \cite{BlandfordBK03} \\
				\cline{2-9} & $O(n)$ & $\star$ & $\star$ & $\star$ & $\diamond$ &\Xomit{$O(n^3\log n)$}$\diamond$ & $\bullet$ & \cite{Fuentes-SepulvedaNS23}\\
				\cline{2-9} & $\star$ & $\diamond$ & $\diamond$ & $\diamond$ & $\diamond$ & $O(n \log n)$ & $O(n \log n)$ & \cite{HeKL00} \\
				\cline{2-9} & $\star$ & $\star$ & $\star$ & $\star$ & $\diamond$ & $O(n \log n)$ & $\star$ &\cite{BlellochF10} \\
				\hline
				\multirow{3}{*}{triconnected planar} & $O(n)$ & $\star$ & $\star$ & $\star$ & $\diamond$ & $\star$ & $\star$ & \cite{ChuangGHKL98} \\
				\cline{2-9} & $O(n)$ & $\diamond$ & $\diamond$ & $\diamond$ & $\diamond$ & $O(n\log n)$ & $O(n\log n)$ & \cite{HeKL00} \\
				\cline{2-9} & $\star$ & $\bullet$ & $\diamond$ & $\bullet$ & $\diamond$ & \Xomit{$O(n^3\log n)$}$\diamond$ & $\bullet$ & \cite{AleardiDS08} \\				
				\hline
				\multirow{6}{*}{plane triangulation} & $O(n)$ & $\diamond$ & $\diamond$ & $\diamond$ & $\diamond$ & $\star$ & $\star$ & \cite{HeKL99,KeelerW95} \\
				\cline{2-9} & $O(n)$ & $\star$ & $\star$ & $\star$ & $\diamond$ & $\star$ & $\star$ & \cite{ChuangGHKL98,YamanakaN10} \\
				\cline{2-9} & $\star$ & $\bullet$ & $\diamond$ & $\bullet$ & $\diamond$ & \Xomit{$O(n^3\log n)$}$\diamond$ & $\bullet$ & \cite{AleardiDS08} \\
				\cline{2-9} & $\star$ & $\diamond$ & $\diamond$ & $\diamond$ & $\diamond$ & $\diamond$ & $\diamond$ & \cite{Poulalhon2006} \\
				\cline{2-9} & $\star$ & $\diamond$ & $\diamond$ & $\diamond$ & $\diamond$ & $O(n\log n)$ & $O(n\log n)$ & \cite{HeKL00} \\
				\cline{2-9} & $\star$ & $\diamond$ & $\diamond$ & $\diamond$ & $\diamond$ & $\star$ & $\star$ & \cite{Lu14} \\
				\cline{2-9} & $\star$ & $\star$ & $\star$ & $\diamond$ & $\diamond$ & \Xomit{$O(n^3\log n)$}$\diamond$ & $\bullet$ & \cite{Fuentes-Sepulveda21} \\
				\hline
				\multirow{3}{*}{planar floorplan} & $O(n)$ & $\diamond$ & $\diamond$ & $\diamond$ & $\diamond$ & $\star$ & $\star$ & \cite{Chuang08,YamanakaN06} \\
				\cline{2-9} & $O(n)$ & $\star$ & $\star$ & $\star$ & $\diamond$ & $\star$ & $\star$ & \cite{YamanakaN10} \\
				\cline{2-9} & $\star$ & $\diamond$ & $\diamond$ & $\diamond$ & $\diamond$ & $\star$ & $\star$ & \cite{Lu14} \\
				\hline
				\multirow{2}{*}{genus-$O(1)$ triangulation} & $O(n)$ & $\star$ & $\diamond$ & $\star$ & $\diamond$ & $\star$ & $\star$ & \cite{AleardiDS2005,AleardiFL09} \\
				\cline{2-9} & $\star$ & $\diamond$ & $\diamond$ & $\diamond$ & $\diamond$ & $\star$ & $\star$ & \cite{Lu14} \\
				\hline
				\multirow{2}{*}{monotone genus-$O(1)$} & $O(n)$ & $\diamond$ & $\diamond$ & $\diamond$ & $\diamond$ & $\star$ & $\star$ & \cite{DeoL98} \\
				\cline{2-9} & $\star$ & $\diamond$ & $\diamond$ & $\diamond$ & $\diamond$ & $\star$ & $\star$ & \cite{Lu14} \\
				\hline
				genus-$O(1)$ floorplan & $\star$ & $\diamond$ & $\diamond$ & $\diamond$ & $\diamond$ & $\star$ & $\star$ & \cite{Lu14} \\
				\hline
				\multirow{2}{*}{$O(1)$-page} & $O(n)$ & $O(\log n)$ & $O(d\log n)$ & $O(\log n)$ & $\diamond$ & $\bullet$ & $\bullet$ & \cite{Jacobson89} \\
				\cline{2-9} & $O(n)$ & $\star$ & $\star$ & $\star$ & $\diamond$ & $\star$ & $\star$ & \cite{BarbayAHM12,GavoilleH08,MunroR01} \\
				\hline
				\multirow{2}{*}{hereditary separable} & $O(n)$ & $\star$ & $\star$ & $\star$ & $\diamond$ & $O(n\log n)$ & $\star$ & \cite{BlandfordBK03} \\
				\cline{2-9} & $\star$ & $\star$ & $\star$ & $\star$ & $\diamond$ & \Xomit{$O(n^3\log n)$}$\diamond$ & $\star$ & \cite{BlellochF10} \\
				\hline
				\multirow{2}{*}{minor-closed} & $O(n\log n)$ & $\star$ & $\bullet$ & $\bullet$ & $\star$ & $\star$ & $\star$ & \cite{KowalikK06} \\
				\cline{2-9} & $\star$ & $\star$ & $\star$ & $\star$ & $\diamond$ & \Xomit{$O(n^3\log n)$}$\diamond$ & $\star$ & \cite{KammerM22} \\
				\hline
				slim (including all above)& $\star$ & $\star$ & $\star$ & $\star$ & $\star$ & $\star$ & $\star$ & this paper\\
				\hline
\end{tabular}
}
\end{center}
        
\caption{We only compare the deterministic bounds for undirected
  unlabeled simple graphs in terms of the number $n$ of vertices and
  the degree $d$ of the queried vertex.  $\textit{Adj}$,
  $\textit{nbrs}$, $\textit{deg}$, and $\textit{near}$ denote the
  adjacency, neighbor-listing, degree, and bounded-distance shortest
  path queries, respectively.  Entries are marked with different
  symbols to indicate their status: A star ($\star$) indicates that
  the entry meets the optimality criterion.  A bullet ($\bullet$)
  indicates that the entry has not been analyzed but should be easy to
  extend the encoding scheme to meet the optimality criterion without
  affecting the analyzed bit counts.  A diamond ($\diamond$) indicates
  that the entry has not been analyzed and it is unclear whether the
  encoding scheme can be adjusted to meet the optimality criterion
  without affecting the analyzed bit counts.  If an analyzed entry
  does not meet our optimality criterion, we list its deterministic
  asymptotic bound.}
\label{figure:figure1}
\end{figure}

Upper-bounding the bit count of the encoded string of a graph $G$ in
$\cC_n$ by $\log |\cC_n|+o(\log|\cC_n|)$ for $\cC$-optimality is
reasonable in the worst-case scenario for all the graphs in
$\cC_n$. However, as the class $\cC$ containing $G$ becomes broader,
this $\cC$-succinctness criterion of the encoded string of $G$ becomes
looser. Encoding schemes based on a single class $\cC$ do not consider
the possibility of shortening the encoded string using the structure
of each individual graph in $\cC$. For example, the encoded string for
a $2$-colorable planar graph produced by a $\cC$-optimal encoding
scheme for the class $\cC$ of planar graphs need not be $\cB$-succinct
for the class $\cB$ of $2$-colorable planar graphs. Similarly, the
encoded string for a tree produced by a $\cC$-optimal encoding scheme
for the class $\cC$ of general graphs may not even be asymptotically
$\cB$-succinct for the class $\cB$ of trees. To encode a graph $G$ as
compactly as possible, one needs to try all known (optimal or not)
encoding schemes for all classes $\cC$ that contain $G$. This would
also require additional efforts of running recognition algorithms for
many classes on $G$.

\paragraph{Multiple-class encoding schemes}
To leverage the inherent structures within individual graphs, we
propose to base an encoding scheme upon a family $\cF$ of multiple
graph classes. The input $n$-vertex graph $G$ is taken from the ground
class (i.\,e., the union $\bigcup \cF$ of all member classes) of the
family $\cF$.  To assess the succinctness of the encoded string of an
$n$-vertex graph $G$, we compare its bit count to the minimum of $\log
|\cC_n|$ over all classes $\cC$ that satisfy $G\in \cC\in \cF$.  The
encoded string $X$ for $G$ is \emph{$\cF$-succinct} if it is
$\cC$-succinct for all classes $\cC$ with $G\in \cC\in \cF$, meaning
that the bit count of $X$ is at most $f(n)+o(f(n))$ for every
continuous super-additive function $f$ that satisfies $\log
|\cC_n|\leq f(n)+o(f(n))$ for at least one class $\cC$ with $G\in
\cC\in \cF$.  For instance, consider the family $\cF$ of the classes
$\cC^{(k)}$ of $k$-colorable graphs. An $\cF$-succinct encoded string
of a 3-colorable non-bipartite graph is $\cC^{(k)}$-succinct for each
$k\geq 3$ but need not be $\cC^{(2)}$-succinct.  Hence, the
$\cF$-succinctness criterion of the encoded string for $G\in \cC\in
\cF$ is at least as stringent as the $\cC$-succinctness criterion,
even when $\log|(\bigcup \cF)_n|$ is considerably larger than
$\log|\cC_n|$.  An encoding scheme is \emph{$\cF$-optimal} if it is
$\cC$-optimal for every member class $\cC$ of the family $\cF$. In
other words, for an encoding scheme to be $\cF$-optimal, its encoding
algorithm must run in deterministic linear time on every graph $G\in
\bigcup \cF$ to generate an $\cF$-succinct encoded string $X$ from
which its decoding algorithm should recover $G$ in deterministic
linear time.  As the diversity of member classes in $\cF$ increases,
an $\cF$-optimal encoding scheme exploits a broader range of graph
structures.

Having a $\cC$-optimal encoding schemes for each member class $\cC$ in
a family $\cF$ does not guarantee an $\cF$-optimal encoding scheme, as
finding a class $\cC$ in $\cF$ containing $G$ with the minimum
$|\cC_n|$ can be expensive. For instance, let the $k$-th class in
$\cF$ consists of the $k$-colorable graphs.  A collection of
$\cC$-optimal encoding schemes for all classes $\cC\in \cF$ need not
yield an $\cF$-optimal encoding scheme, since computing the chromatic
number of a graph is NP-complete~\cite{GareyJ79}.  Hence, it might
appear impossible to design an $\cF$-optimal encoding scheme unless
determining a minimizer of $|\cC_n|$ over all classes $\cC$ with $G\in
\cC\in \cF$ takes linear time. However, our result in this paper
indicates that this is not necessarily the case. In what follows, we
first provide some definitions for graphs and their classes and then
explain our $\cF$-optimal encoding scheme $A^*$ for a family $\cF$ of
an infinite number of graph classes such that $\bigcup \cF$ comprises
all graphs of bounded Hadwiger numbers.  Our $\cF$-optimal encoding
scheme $A^*$ does not require explicit or implicit knowledge of the
exact or approximate values of $\log |\cC_n|$ for the member classes
$\cC$ of $\cF$.  Our $A^*$ needs no recognition algorithm of any
member class $\cC$ of $\cF$.  $A^*$ is $\cC$-optimal for each class
$\cC$ in $\cF$ even if recognizing a graph of some class $\cC\in \cF$
is an undecidable problem.

\paragraph{Concepts for graphs}
An \emph{induced subgraph} of $G$ is a graph that can be obtained by
deleting zero or more vertices and their incident edges from $G$. A
\emph{subgraph} of $G$ is a graph that can be obtained from an induced
subgraph of $G$ by deleting zero or more edges. A \emph{minor} of $G$
is a graph that can be obtained from a subgraph of $G$ by contracting
zero or more edges. An induced subgraph (respectively, a subgraph) of
a graph $G$ is a subgraph (respectively, a minor) of $G$, but not the
other way around.

Let $V(G)$ denote the vertex set of a graph $G$, and $E(G)$ its edge
set.  Let $\vec{uv}$ denote a directed edge from $u$ to $v$.  Let
$G^r$ for a graph $G$ denote the graph on $V(G)$ with $\vec{uv}\in
E(G)$ if and only if $\vec{vu}\in E(G^r)$.  Let $G\cap H$ denote the
maximal common subgraph of graphs $G$ and $H$.  Let $G\cup H$ denote
the minimal common supergraph of graphs $G$ and $H$.  The
\emph{Hadwiger number}~\cite{Kostochka84} (also known as the
\emph{contraction clique number}~\cite{BollobasCE80}) of a graph $G$,
denoted $\eta(G)$, is defined as the largest integer $k$ such that the
$k$-clique is a minor of $G\cup G^r$.  Thus, the Hadwiger number of a
forest or a planar graph is at most $2$ or $4$,
respectively. Hadwiger's conjecture~\cite{Hadwiger43} states that the
Hadwiger number $\eta(G)$ upper-bounds the chromatic number of $G$,
and it remains one of the deepest~\cite{BollobasCE80} and most
famous~\cite{Seymour16} open problems in graph theory.

An edge $\vec{uv}$ is called an \emph{outgoing} edge of $u$
(respectively, an incoming edge of $v$) or simply a \emph{$u$-out}
(respectively, $v$-in) edge.  An \emph{$i$-orientation} is a graph in
which each vertex has at most $i$ outgoing edges.  A graph $D$ is an
$i$-orientation \emph{for} a graph $G$ if $D$ is an $i$-orientation
with $D\cup D^r = G\cup G^r$.  It is known that
$|E(G)|=O(|V(G)|\cdot\eta(G)\sqrt{\log \eta(G)})$~(see,
e.g.,~\cite{Kostochka84,Mader67,Thomason01}). Since $\eta(G)=O(1)$
implies $|E(G)|=O(|V(G)|)$, the minimum degree of a graph $G$ with
$\eta(G)=O(1)$ is $O(1)$.  By $\eta(H)\leq \eta(G)$ for each induced
subgraph $H$ of $G$, an $n$-vertex graph $G$ with $\eta(G)=O(1)$
admits an $O(n)$-time obtainable $O(1)$-orientation.  Although the
Hadwiger number $\eta(G)$ of an $n$-vertex graph $G$ is NP-hard to
compute~\cite{Eppstein09} and cannot be computed in $n^{o(n)}$
time~\cite{FominLMSZ21} unless ETH~\cite{ImpagliazzoP99} fails, one
can determine whether $\eta(G)\leq h$ for any given $h=O(1)$ in
$O(n^2)$ time~\cite{KawarabayashiKR12,RobertsonS95-graphminor13} and
possibly in $O(n\log n)$ time~\cite{DujmovicHJRW13}.  For related work
on graphs with bounded Hadwiger numbers,
see~\cite{BravermanJKW21,CzerwinskiNP21,FoxW17,Grohe12,GroheNW23,GroheS20,LeviR15,LokshtanovPP022}.

\paragraph{Concepts for graph classes}
We defined a class $\cC$ of graphs as nontrivial if $\log
|\cC|=\Omega(n)$.  The class of unlabeled undirected paths is not
nontrivial, while those of trees and plane triangulations are
nontrivial since the logarithms of the numbers of $n$-vertex trees and
plane triangulations are $2n-o(n)$ and $(\log \frac{256}{27})\cdot
n-o(n)$~\cite[(8.1)]{Tutte62}, respectively.  The intersection of
nontrivial classes may not be nontrivial since it can be empty.

A class $\cC$ of graphs is \emph{minor-closed}, \emph{monotone}, or
\emph{hereditary} if every minor, subgraph, or induced subgraph of
each graph in $\cC$ remains in $\cC$, respectively. A minor-closed
(respectively, monotone) class of graphs is also monotone
(respectively, hereditary), but not vice versa. For instance, the
class of trees is not hereditary, the class of complete graphs is
hereditary but not monotone, the class of 2-colorable graphs is
monotone but not minor-closed, and the class of forests is
minor-closed.  The intersection of minor-closed, monotone, or
hereditary classes remains minor-closed, monotone, or hereditary,
respectively.

Let $G[U]$ (respectively, $G-U$) for a vertex subset $U$ of $G$ denote
the subgraph of $G$ induced by $U$ (respectively, $V(G)\setminus U$).
Let the $N_G(U)$ of a vertex subset $U$ in $G$ consist of the vertices
of $V(G-U)$ that is adjacent to one or more vertices of $U$ in $G$.
Disjoint vertex sets $U$ and $V$ are \emph{adjacent} in a graph $G$ if
$N_G(U)\cap V\ne\varnothing$, i.\,e., $\{\vec{uv},\vec{vu}\}\cap
E(G)\ne\varnothing$ holds for some vertices $u\in U$ and $v\in V$.
Let $N_G[U]=N_G(U)\cup U$. The \emph{open} (respectively,
\emph{closed}) \emph{neighborhood} of $U$ in $G$ is $G[N_G(U)]$
(respectively, $G[N_G[U]]$).  The \emph{quasi-neighborhood} $G(U)$ of
$U$ in $G$ is the closed neighborhood of $U$ in $G$ excluding the
edges in the open neighborhood of $U$ in $G$, i.\,e.,
$G(U)=G[N_G[U]]\setminus E(G-U)$.  A \emph{$k$-quasi-member} of a
class $\cC$ of graphs is a graph that can be obtained from a graph in
$\cC$ by deleting at most $k$ vertices and their incident edges.  A
class $\cC$ of graphs is \emph{quasi-monotone} if the
quasi-neighborhood $G(U)$ of each vertex subset $U$ of each graph $G$
in $\cC$ is an $O(|N_G(U)|)$-quasi-member of $\cC$.  For instance, a
forest can be made a tree by adding a new vertex with several incident
edges to connect its connected components.  Thus, the class of trees
is quasi-monotone, since each non-tree subgraph of a tree is a
$1$-quasi-member of the class of trees.  A monotone class of graphs is
quasi-monotone but not vice versa.  The intersection of quasi-monotone
classes remains quasi-monotone.

We call a class $\cC$ of graphs \emph{slim} if it is nontrivial and
quasi-monotone and admits a constant that upper-bounds the Hadwiger
numbers of all graphs in $\cC$.  Examples of slim classes include
trees, forests, series-parallel graphs, and all classes listed in
Figure~\ref{figure:figure1}. The monumental graph minor
theory~\cite{Lovasz06} by Robertson and
Seymour~\cite{RobertsonS83-graphminor1, RobertsonS86-graphminor2,
  RobertsonS84-graphminor3, RobertsonS90-graphminor4,
  RobertsonS86-graphminor5, RobertsonS86-graphminor6,
  RobertsonS88-graphminor7, RobertsonS90-graphminor8,
  RobertsonS90-graphminor9, RobertsonS91-graphminor10,
  RobertsonS94-graphminor11, RobertsonS95-graphminor12,
  RobertsonS95-graphminor13, RobertsonS95-graphminor14,
  RobertsonS96-graphminor15, RobertsonS03-graphminor16,
  RobertsonS99-graphminor17, RobertsonS03-graphminor18,
  RobertsonS04-graphminor19, RobertsonS04-graphminor20,
  RobertsonS09-graphminor21, RobertsonS12-graphminor22,
  RobertsonS10-graphminor23} confirms Wagner's~\cite{Wagner70}
conjecture that each minor-closed class of undirected graphs can be
defined by a finite list of excluded minors. Hence, all nontrivial
minor-closed classes of graphs other than the one composed of all
graphs (whose forbidden minor set is empty) are slim, implying that
the class of graphs with genus no more than a constant $g$ is
slim. Moreover, since each graph having a bounded Hadwiger number is
separable~\cite{AlonST94,ReedW09,Wulff-Nilsen11}, an $n$-vertex graph
in a slim class $\cC$ of graphs can be represented in $O(n)$
bits~\cite{BlandfordBK03}. Therefore, $\log |\cC_n|=\Theta(n)$ holds
for each slim class $\cC$ of graphs.

\paragraph{Our optimal multiple-class encoding scheme}
For the rest of the paper, let $\cF$ denote the family of all slim
classes of graphs.  We present an $\cF$-optimal encoding scheme
$A^*$. This means that our encoding scheme $A^*$ is $\cC$-optimal for
every nontrivial quasi-monotone class $\cC$ of graphs that admits a
constant $h$ satisfying the condition $\eta(G)\leq h$ for all graphs
$G\in \cC$.  Since for each integer $h\geq 2$ the graphs $G$ with
$\eta(G)\leq h$ form a distinct slim class of graphs, the number of
member classes within $\cF$ is infinite.  Imposing other known or
unknown nontrivial properties that are monotone or quasi-monotone on
each of these infinite slim classes leads to more varieties.  Examples
of monotone slim classes include graphs with genus at most $2$,
$3$-colorable plane graphs, graphs with page number at most $4$, and
graphs with girths at least $5$. Examples of non-monotone slim classes
include trees, triconnected planar graphs, and triangulations or
floor-plans of genus-$O(1)$
surfaces~(see~\S\ref{subsection:subsection2.4}).
 
We present our $\cF$-optimal encoding scheme~$A^*$ using a simple
unweighted directed $n$-vertex graph $G$ given in an adjacency list.
Our~$A^*$ accepts additional information of $G$ such as vertex colors
or edge directions which can be recovered in tandem with $G$ by the
decoding algorithm of $A^*$ and answered by the query algorithms of
$A^*$, as long as at least one slim class $\cC$ containing $G$
equipped with the additional information satisfies $\log
|\cC_n|=\Theta(n)$.  Moreover, if the given adjacency list of $G$
reflects a genus-$O(1)$ embedding, then $A^*$ can also accept the
embedding as additional information.  Our $A^*$ supports the following
fundamental queries in $O(1)$ time:
\begin{itemize}
\item
Output $|N_G(u)|$ and the numbers of $u$-out and $u$-in edges of $G$.

\item
Output a neighbor $v$ of $u$ in $G$, if $|N_G(u)|\geq 1$.

\item
Output the color assigned to the vertex $u$.

\item
Output whether $\vec{uv}\in E(G)$.
  
\item
Output the direction assigned to the edge $\vec{uv}$ by an equipped
orientation $D$ for $G$.
  
\item
Output the incident edges of $u$ (respectively $v$) preceding and
succeeding $\vec{uv}$ in clockwise orders around $u$ (respectively,
$v$) according to the embedding of $G$ produced by the decoding
algorithm.

\item
Output a shortest $uv$-path of $G$ if there is one with length bounded
by a prespecified constant $t$.
\end{itemize}
The neighbor-listing query for a vertex $u$ can be supported in
$O(|N_G(u)|)$ time using the $O(1)$-time queries of reporting a
neighbor and the next neighbor.  The above list is not
exhaustive. Additional queries can be created by following the design
of the $o(n)$-bit encoded strings and $O(1)$-time query algorithms
elaborated in~\S\ref{section:section3}.  As mentioned above,
$\cC$-optimal encoding schemes that support $O(1)$-time degree query
were only known for the classes $\cC$ of trees and general
graphs. Hence, our $\cF$-optimal encoding scheme $A^*$ significantly
broadens the graph classes $\cC$ that admit $\cC$-optimal encoding
schemes which also support fundamental queries in $O(1)$ time per
output.

The ground class $\bigcup \cF$ contains all graphs with bounded
Hadwiger numbers, since the class of graphs with $\eta(G)\leq h$ for
each integral constant $h\geq 2$ is slim. Thus, the encoding algorithm
of $A^*$ can compute an encoded string $X$ for an input $n$-vertex
graph $G$ with $\eta(G)=O(1)$ in $O(n)$ time, and the decoding
algorithm of $A^*$ can decode $X$ back to $G$ in $O(n)$ time. The bit
count of $X$ is essentially bounded by $\log |\cC_n|+o(\log |\cC_n|)$
for each slim class $\cC$ containing $G$, which means that our
encoding scheme $A^*$ automatically exploits all slim structures of
$G$ to encode $G$ as compactly as possible. We emphasize that $A^*$
does not require explicit or implicit knowledge of the slim structures
$\cC$ of $G$ or the exact or approximate values of their $\lceil\log
|\cC_n|\rceil$.

The encoded string $X$ for an input $n$-vertex graph $G\in \bigcup
\cF$ produced by our encoding scheme $A^*$ is the concatenation of an
$\cF$-succinct base string $X_{\textit{base}}$ and an $o(n)$-bit
string $X_q$ for each supported query~$q$. The input graph $G$ and the
equipped additional information can be reconstructed using solely
$X_{\textit{base}}$.  Thus, the bit count of $X_{\textit{base}}$ is
affected by the complexity of the additional information.  As an
example, $A^*$ can encode a $3$-colored undirected plane graph $G$ as
a directed planar graph that is equipped with a $3$-coloring for the
vertices and a planar embedding for the edges according to which the
edge $\vec{uv}$ for every two adjacent vertices $u$ and $v$ of $G$
always immediately precedes the edge $\vec{vu}$ in clockwise order
around $u$.

\begin{itemize}
\item 
If the coloring and embedding are required to be recovered together
with $G$ by the decoding algorithm of $A^*$, then the base string
$X_{\textit{base}}$ is $\cC$-succinct for each slim class $\cC$ of
$3$-colored undirected plane graphs that contains the input graph $G$.
This $X_{\textit{base}}$ need not be $\cC$-succinct for each slim
class $\cC$ of $3$-colorable undirected planar graphs (i.\,e., without
equipped planar embeddings) that contains $G$.  Via two $o(n)$-bit
strings $X_q$ appended to $X_{\textit{base}}$, the query algorithms of
$A^*$ support the two queries $q$ of reporting the color of a vertex
and obtaining the edges immediately succeeding and preceding an edge
$\vec{uv}$ around $u$ (respectively, $v$) in clockwise order according
to the equipped planar embedding of $G$.

\item 
If the given coloring and planar embedding of $G$ need not be
recovered, the decoding algorithm of $A^*$ reports a graph $H$
isomorphic to $G$ represented in an adjacency list reflecting an
embedding of $H$ that might have nonzero genus.  The base string
$X_{\textit{base}}$ is $\cC$-succinct for each slim class $\cC$ of
$3$-colorable undirected planar graphs that contains the input graph
$G$.  $A^*$ supports the query of obtaining the edges immediately
succeeding and preceding an edge $\vec{uv}$ around $u$ (respectively,
$v$) in clockwise order according to the embedding of $H$.
\end{itemize}
The concatenation structure of $X$ allows new queries be supported by
appending an $o(n)$-bit string $X_{q}$ for each new query $q$ to the
original $X$, eliminating the need to recompute the entire encoded
string.

\paragraph{Comparing with previous work} 
Reed and Wood~\cite{ReedW09} suggested that the most general setting
for separator theorems involves graphs that exclude a fixed minor. It
remains unclear whether a graph's separability implies a constant
bound on its Hadwiger number. However, all hereditary classes of
separable graphs known to date are
minor-closed~\cite{KawarabayashiR10}. Consequently, our $\cF$-optimal
encoding scheme $A^*$ outperforms all previous encoding schemes listed
in Figure~\ref{figure:figure1}.
\begin{itemize}
\item
The encoded strings produced by He, Kao, and Lu's~\cite{HeKL00}
encoding scheme for the class $\cC$ of planar graphs and Blandford,
Blelloch, and Kash's~\cite{BlandfordBK03} encoding scheme for
hereditary classes $\cC$ of separable graphs are $\cC$-succinct.
However, their encoding and decoding algorithms run in $O(n\log n)$
time. Lu~\cite{Lu14} extended He et al.'s framework to accommodate all
non-trivial monotone classes of bounded-genus graphs and some
non-monotone classes of graphs, improving the encoding and decoding
time to $O(n)$, but without supporting efficient queries.  Blelloch
and Farzan~\cite{BlellochF10} extended Blandford et
al.'s~\cite{BlandfordBK03} encoding framework to support adjacency,
degree, and neighbor queries in $O(1)$ time per output. Blelloch et
al.'s extension is based on Raman, Raman, and Satti's~\cite{RamanRS07}
indexable dictionary~\cite[Lemma~1]{BlellochF10} and fully indexable
dictionary~\cite[Lemma~2]{BlellochF10}.  Raman et
al.~\cite[\S8]{RamanRS07} only mentioned that their indexable
dictionary can be constructed in expected linear time without
commenting on its deterministic time bound.  The expected and
deterministic time bounds of their fully indexable dictionary are not
explicitly stated.  Let $\textit{poly}(f(n))=f(n)^{O(1)}$.  According
to the first author Raman~\cite{Raman02} of the indexable
dictionary~\cite{RamanRS07}, the deterministic constructing time of an
indexable dictionary for a $\textit{poly}(n)$-bit string having $n$
$1$-bits is $O(n^3\log n)$, which is the same as that of Fredman,
Koml{\'{o}}s, and Szemer{\'{e}}di's~\cite{FredmanKS84} data structure.

\item
Previous encoding
schemes~\cite{ChiangLL05,ChuangGHKL98,Fuentes-SepulvedaNS23,GavoilleH08,KowalikK06,MunroR01}
for the classes $\cC$ of planar and $O(1)$-page graphs, which support
adjacency, degree, neighbors, or bounded-distance shortest path
queries in $O(1)$ time per output, take linear time to encode and
decode, but their encoded string are not $\cC$-succinct. In
particular, Kowalik and Kurowski's~\cite{KowalikK06} encoding scheme
for planar graphs, which is extendable to accommodate minor-closed
classes, requires $O(n)$ words and thus $O(n\log n)$ bits.  Kammer and
Meintrup~\cite{KammerM22} optimized the bit count for minor-closed
classes but did not support efficient query of bounded-distanced
shortest path.  Kammer et al.'s result is based on Blelloch and
Farzan's~\cite{BlellochF10} encoding framework for hereditary classes
of separable graphs, so their encoding algorithm runs in deterministic
$O(n^3\log n)$ time.  Fuentes-Sep\'{u}lveda, Seco, and
Via\~{n}a's~\cite{Fuentes-Sepulveda21} result for plane triangulations
and Fuentes-Sep\'{u}lveda, Navarro, and
Seco's~\cite{Fuentes-SepulvedaNS23} result for planar graphs also use
Raman et al.'s~\cite{RamanRS07} indexable dictionary.
Castelli~Aleardi, Devillers, and Schaeffer~\cite{AleardiDS08}
mentioned that their encodings for triconnected planar graphs and
plane triangulations supporting local queries can be constructed in
linear time.  However, they did not provide details about how to
answer queries for concatenated code segments. Assuming that their
encoding algorithm is also based on Raman et al.'s indexable
dictionary, their deterministic encoding time is $O(n^3\log n)$.
\end{itemize}

Our $\cF$-optimal encoding scheme $A^*$ has only one encoding
algorithm that automatically handles all infinite member classes $\cC$
in $\cF$, regardless of whether a class $\cC$ is monotone or merely
quasi-monotone.  That is, there is no need to adjust our $A^*$ for any
non-monotone member class $\cC$ of $\cF$.  Moreover, $A^*$ does not
require a user to supply any recognition algorithms for any slim class
$\cC$ of graphs with prespecified time complexity bounds. Note that
$A^*$ is effective even if the recognition problem for a member class
$\cC$ of $\cF$ is undecidable.  In contrast, each previous
$\cC$-succinct encoding
scheme~\cite{BlellochF10,AleardiDS08,HeKL00,KammerM22,Lu14,PoulalhonS06}
for a non-trivial class $\cC$ of graphs other than trees and general
graphs requires its user to specify $\cC$ by supplying a recognition
algorithm for the class $\cC$ that runs in $O(1)^{\textit{poly}(n)}$
time in order to construct a look-up table for all tiny graphs in
$\cC$. Moreover, different classes $\cC$ need different approaches for
their decoding algorithm to glue tiny subgraphs of the input graph $G$
to recover $G$. For instance, Castelli Aleardi et
al.~\cite{AleardiDS08} used two separate sections of their paper to
describe their encoding schemes for plane triangulations and
triconnected planar graphs. Lu's~\cite[\S5]{Lu14} framework for
non-monotone classes accommodated triangulations and floor-plans in
different ways.

\paragraph{Technical overview}
Similar to all former $\cC$-succinct encoding schemes for nontrivial
classes $\cC$ other than trees and general graphs, our $\cF$-optimal
encoding scheme $A^*$ is based on decomposing the input graph into
tiny subgraphs.  Lu's~\cite{Lu14} encoding algorithm on a graph whose
bounded-genus embedding is either given or linear-time computable uses
Dijdjev and Venkatesan's~\cite{DjidjevV95} planarizer for a graph of
bounded genus and Goodrich's~\cite{Goodrich95} separator-decomposition
tree for a planar graph.  Our encoding scheme $A^*$ decomposes the
input graph based on Reed and Wood's algorithms to obtain a separator
of $G$~\cite[Theorem~1.2]{ReedW09} and an $H$-partition of $G$ for an
$O(n/\log^2 n)$-vertex graph $H$ with $\eta(H)\leq
\eta(G)$~\cite[Lemma~4.1]{ReedW09}.  By choosing these tools, our
encoding scheme $A^*$ does not rely on any embedding of the input
graph, although it is acceptable to encode a genus-$O(1)$ embedding in
tandem with the input graph.

To avoid the requirement of supplying a recognition algorithm for any
slim class $\cC$, our encoding scheme $A^*$ does not precompute a
look-up table for all tiny subgraphs in the slim classes
$\cC$. Instead, $A^*$ collects the tiny subgraphs of the input graph
$G$ at the bottom of the recursive decomposition and computes a
code-book string only for them. We will prove that, as long as $G$
belongs to a slim class $\cC$ (no matter whether $\cC$ is monotone or
merely quasi-monotone), the encoded string is guaranteed to be
$\cC$-succinct.  Our encoding scheme $A^*$ can be directly applied to
any graph having a bounded Hadwiger number without modification for
any slim class.  For a given class $\cC$ of graphs to have our
encoding scheme $A^*$ as a $\cC$-optimal encoding scheme, one just
have to prove that $\cC$ is slim.  This is exactly what we do for
triconnected planar graphs and triangulations and floor-plans of
genus-$O(1)$ surfaces in~\S\ref{subsection:subsection2.4}.  In order
to ensure that the encoding length we get is also $\cC$-succinct for
each non-monotone slim class $\cC$ that contains the input graph $G$,
we classify a $k$-vertex subgraph $H$ of $G$ into two sets according
to whether it admits a vertex subset $U\subseteq V(H)$ with $H=G(U)$
and $|N_G(U)|\leq |V(H)|/\log^2 |V(H)|$.  If it does, then we can
prove that the encoded string for $H$ is $\cC$-succinct by showing
that its bit count is bounded by $f(k)+o(f(k))$ for any any continuous
super-additive function $f$ with $\log|\cC_k|\leq f(k)+o(f(k))$.
Although the encoded string for an $H$ that does not admit such a
vertex subset $U$ need not be $\cC$-succinct, we manage to show that
the number of such subgraphs $H$ that occur in our recursive encoding
algorithm is too few to ruin the $\cC$-succinctness of the final
encoded string for $G$.  The ``star partition''
(see~\S\ref{subsection:subsection2.1}) is crucial in ensuring this.

The key of supporting queries lies in designing an $o(n)$-bit string
so that information involving multiple tiny graphs can be obtained in
$O(1)$ time.  Our $\cF$-optimal encoding scheme $A^*$ does not rely on
an indexable dictionary like that of Raman et al.~\cite{RamanRS07}
whose encoding size is very close to information-theoretical minimum,
since we cannot afford its super-linear deterministic construction
time.  Instead, we just need an $O(r\log m)+o(m)$-bit representation
for an $m$-bit string having $r$ $1$-bits to support its membership,
rank, and select queries in $O(1)$ time.  We show
in~\S\ref{subsection:subsection3.1} that such a representation can be
constructed in deterministic linear time (see
Lemma~\ref{lemma:lemma4}).

A challenging task for the query support of $A^*$ is to design an
$O(n)$-time computable $o(n)$-bit string $X_q$ from which the query
$\textit{near}_t(u,v)$ of bounded-distance shortest path can be
answered in $O(1)$ time.  Kowalik and Kurowski~\cite{KowalikK06}
showed an involved $O(n)$-time computable $O(n\log n)$-bit data
structures for an $n$-vertex undirected planar graph that supports the
query in $O(1)$ time.  They only briefly sketched how the data
structure can be extended to work for minor-closed classes of graphs.
Our improvement is two-fold:
\begin{itemize}
\item
We provide a simpler solution to achieve the same objective of
supporting the query in $O(1)$ time not just for an undirected planar
graph but directly for an $n$-vertex directed graph $G$ with
$\eta(G)=O(1)$.  Specifically, we present an $O(n)$-time computable
$O(1)$-orientation $D$ for $G$ from which the query can be answered in
constant time.  The description and proof for such a $D$, which is
called a ``director'' for $G$ in \S\ref{subsection:subsection3.4}, are
much shorter than those of Kowalik and Kurowski's data structure.
Recall that our $\cF$-optimal encoding scheme $A^*$ accepts additional
information like an orientation for the graph.  Thus, $A^*$ can
directly encode $G$ together with a director $D$ for $G$ such that the
direction of each edge of $G$ assigned by $D$ can be answered in
$O(1)$ time.  This leads to an $O(n)$-time computable $O(n)$-bit
encoded string for $(G,D)$ that supports the bounded-distance
shortest-path query in $O(1)$ time, already improving upon the
$O(n\log n)$-bit result of Kowalik and Kurowski for the class of
planar graphs and minor-closed classes of graphs.

\item 
The above $O(n)$-bit encoded string for a pair $(G,D)$ of two graphs
$G$ and $D$ need not be an $\cF$-succinct encoded string for $G$.  To
obtain an $\cF$-succinct encoded string for $G$ that also supports the
query in $O(1)$ time, we show that $G$ admits a director $D$ such that
the number of extra bits required to encode $(G,D)$ in addition to
that required to encode $G$ is only $o(n)$.  The trick is to directly
equip a director for each decomposed tiny subgraph at the bottom of
the recursive decomposition and keep it in an $o(n$)-bit code-book
string. As a result, we do not need too many extra bits to encode the
equipped director for each tiny subgraph.  The challenge then lies in
showing that $o(n)$ bits suffice to combine the equipped directors of
those tiny subgraphs of $G$ into a director for $G$.  The star
partition of $G$ to be presented in~\S\ref{subsection:subsection2.1}
is once again crucial in accomplishing this objective.
\end{itemize}

\paragraph{Computation model and road-map}
We assume the conventional unit-cost RAM model of computation, in
which operations such as read, write, and add on $O(\log n)$
consecutive bits take $O(1)$ time.  The model has been adopted by all
previous work on graph encoding except that of
Jacobson~\cite{Jacobson89}.

The rest of the paper is organized as
follows. Section~\ref{section:section2} shows the linear-time encoding
and decoding algorithms of $A^*$.  Section~\ref{section:section3}
shows the query algorithms of $A^*$, which run in $O(1)$ time per
output.  Section~\ref{section:section4} concludes the paper.

\section{The encoding and decoding algorithms} 
\label{section:section2}

Section~\ref{subsection:subsection2.1} shows that each graph of
bounded Hadwiger number admits a linear-time computable star
partition.  Sections~\ref{subsection:subsection2.2}
and~\ref{subsection:subsection2.3} present the encoding and decoding
algorithms of our $\cF$-optimal encoding scheme $A^*$ based on star
partition. Section~\ref{subsection:subsection2.4} shows that the
classes $\cC$ of triconnected planar graphs and triangulations and
floor-plans of genus-$O(1)$ surfaces are all slim, implying that $A^*$
is $\cC$-optimal for each of these three classes $\cC$.

\subsection{Star partition}
\label{subsection:subsection2.1}
Let $[i,j]$ for integers $i$ and $j$ consist of the integers $k$ with
$i\leq k\leq j$.  Let $[j]=[1,j]$.  We call $(V_0,\ldots,V_p)$ a
\emph{partition} of a graph $G$ if $V_0,\ldots,V_p$ are
pairwise-disjoint subsets of $V(G)$ whose union is $V(G)$.  As
illustrated in Figure~\ref{figure:figure2}(a), a partition
$(V_0,\ldots,V_p)$ of an $n$-vertex graph $G$ is a \emph{star}
partition of $G$ if
\begin{enumerate}[label=\emph{S\arabic*:}, ref={S\arabic*}]
\item 
\label{condition:S1}
$V_1,\ldots,V_p$ are pairwise nonadjacent in $G$,

\item 
\label{condition:S2}
$|N_G[V_i]|=\textit{poly}(\log n)$ holds for each $i\in[p]$, and

\item 
\label{condition:S3}
$|V_0|+p+\sum_{i\in [p]} |N_G(V_i)|=O(n/\log^{4/3} n)$.
\end{enumerate}

\begin{figure}
\centerline{\scalebox{0.5}{\includegraphics{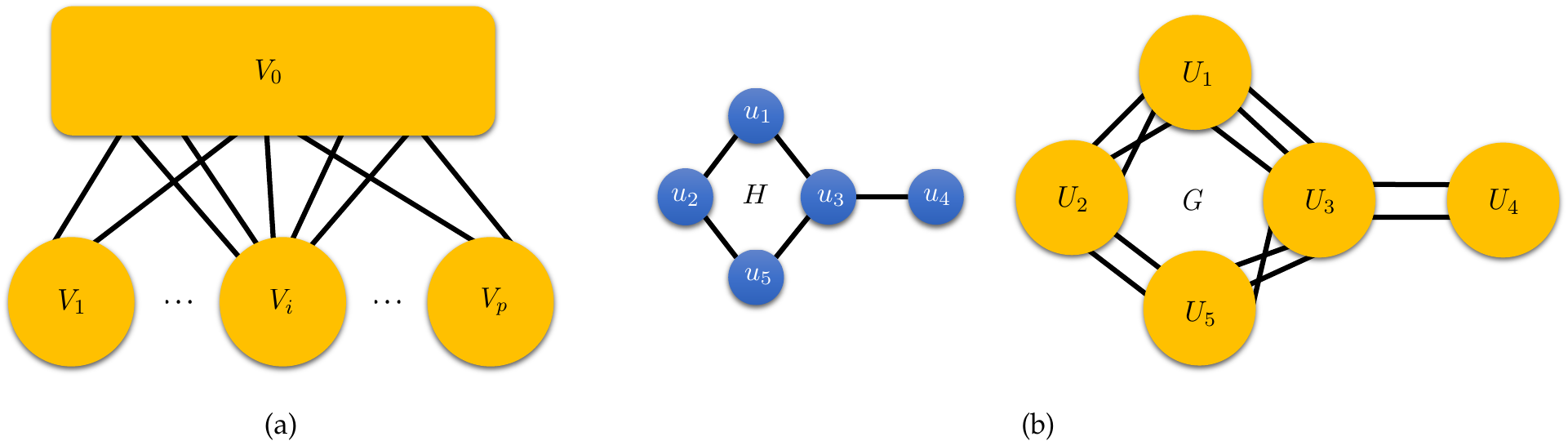}}}
\caption{(a) An illustration for star partition. (b) An illustration
  for an $H$-partition of $G$.}
\label{figure:figure2}
\end{figure}

\begin{lemma}
\label{lemma:lemma1}
It takes $O(n)$ time to obtain a star partition of an $n$-vertex graph
$G$ with $\eta(G)=O(1)$.
\end{lemma}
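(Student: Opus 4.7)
The plan is to combine Reed and Wood's $H$-partition~\cite[Lemma~4.1]{ReedW09} with a recursive $r$-division based on their separator theorem~\cite[Theorem~1.2]{ReedW09}. I first invoke the $H$-partition on $G$ to obtain bags $\{B_v\}_{v\in V(H)}$, each of size $O(\log^2 n)$, whose quotient graph $H$ has $|V(H)|=O(n/\log^2 n)$ and $\eta(H)\leq \eta(G)=O(1)$; by construction, bags $B_u$ and $B_v$ span an edge of $G$ only if $uv\in E(H)$, so all cross-bag adjacency in $G$ is governed by the edges of $H$. Because $\eta(H)=O(1)$ and $H$ therefore admits an $O(\sqrt{N})$-separator, I then recursively $r$-divide $H$ with parameter $s:=\log^{8/3}n$, producing clusters $C_1,\dots,C_p$ of size at most $s$ and a separator set $S:=V(H)\setminus\bigcup_i C_i$ that satisfies $|S|=O(|V(H)|/\sqrt{s})$ while each $C_i$ has boundary $|N_H(C_i)\cap S|=O(\sqrt{s})$. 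Finally, I lift the decomposition: set $V_0=\bigcup_{v\in S}B_v$ and $V_i=\bigcup_{v\in C_i}B_v$ for every $i\in[p]$.

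The choice $s=\log^{8/3}n$ is tuned so that \ref{condition:S1}--\ref{condition:S3} hold simultaneously. Condition~\ref{condition:S1} is immediate because $S$ separates the $C_i$'s in $H$ and cross-bag edges respect $H$. For~\ref{condition:S2}, $|V_i|\leq s\cdot O(\log^2 n)=O(\log^{14/3}n)$ and $|N_G(V_i)|\leq |N_H(C_i)\cap S|\cdot O(\log^2 n)=O(\log^{10/3}n)$, both $\mathrm{poly}(\log n)$. For~\ref{condition:S3}, each summand is controlled by $n/\sqrt{s}=n/\log^{4/3}n$: $|V_0|=|S|\cdot O(\log^2 n)=O(n/\sqrt{s})$, $p\leq O(|V(H)|/s)=O(n/\log^{14/3}n)$, and $\sum_i|N_G(V_i)|=O(|S|\cdot\log^2 n)=O(n/\sqrt{s})$, where the last step uses the standard property that each boundary vertex in a proper $r$-division lies on the boundary of $O(1)$ pieces.

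The main obstacle I foresee is executing the whole pipeline in deterministic linear time. Since $\eta(G)=O(1)$ implies $|E(G)|=O(n)$, a single separator call on $H$ costs $O(|V(H)|)=O(n/\log^2 n)$, but the recursion must be organized so that the total work telescopes to $O(n)$ rather than $O(n\log n)$; this I would handle by the usual Lipton--Tarjan/Frederickson amortization, charging the cost at each recursion level to a constant-fraction shrinkage of the subproblems. A second subtle point is the per-cluster boundary bound $O(\sqrt{s})$: it does not come for free from a black-box balanced separator but requires a boundary-aware $r$-division in the style of Frederickson or Goodrich, adapted to bounded-Hadwiger graphs via the $O(\sqrt{N})$-separator of~\cite{ReedW09}. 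The step I would scrutinize most is whether Reed--Wood's $H$-partition itself can be realized in linear time on bounded-Hadwiger inputs, since the cited lemma asserts existence but an implementation note is needed to match the $O(n)$ bound claimed here.
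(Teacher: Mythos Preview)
Your plan is structurally the same as the paper's: apply Reed--Wood's $H$-partition (the paper's Lemma~\ref{lemma:lemma2}) to pass to a small quotient $H$ with $\eta(H)=O(1)$, recursively separate $H$ via Reed--Wood's balanced separator (the paper's Lemma~\ref{lemma:lemma3}), and lift the resulting pieces back to $G$. However, one factual slip derails your arithmetic: Reed--Wood's Theorem~1.2 yields, in linear time, a separator of size $O(m^{2/3})$, not $O(\sqrt{m})$. (Square-root separators for $K_h$-minor-free graphs do exist, but not from this reference and not, as far as is known, in deterministic linear time.) With the $2/3$ exponent, an $r$-division of an $N$-vertex $H$ into pieces of size $\leq s$ leaves total separator $O(N/s^{1/3})$ rather than $O(N/\sqrt{s})$; plugging in your $|V(H)|=O(n/\log^2 n)$ and $s=\log^{8/3}n$ gives $|V_0|=O(n/\log^{8/9}n)$, which violates~\ref{condition:S3}. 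The fix is simply a larger $s$ (on the order of $\log^4 n$); the paper instead takes bags of size $O(\log n)$, so $m=\Theta(n/\log n)$, stops the recursion at $b^2=\log^4 n$ nodes of $H$, and additionally throws all $G$-vertices of degree exceeding $\log^2 n$ into $V_0$.

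Two of the obstacles you flag are non-issues in the paper's execution. First, no Lipton--Tarjan/Frederickson amortization is needed for the time bound: with $m=\Theta(n/\log n)$, the naive recursion on $H$ costs $O(m)$ per level over $O(\log m)$ levels, hence $O(m\log m)=O(n)$; your choice $m=O(n/\log^2 n)$ would work the same way. Second, the paper never invokes a boundary-aware $r$-division or the ``each boundary vertex lies in $O(1)$ pieces'' property you call standard (which is standard for planar graphs but would have to be re-proved for bounded Hadwiger number). Instead, the recursion keeps the separator $C$ in both children, and $\sum_i |N_G(V_i)|$ is controlled by summing the separator sizes level by level (Equation~\eqref{equation:equation1}), which requires only the crude bound $|C|=O(|\Psi(C)|^{2/3})$. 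Finally, your hesitation about the $H$-partition is unnecessary: Lemma~\ref{lemma:lemma2} is stated with an $O(n)$ running time.
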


The rest of the subsection proves Lemma~\ref{lemma:lemma1} by
Lemmas~\ref{lemma:lemma2} and~\ref{lemma:lemma3} below.  As
illustrated in Figure~\ref{figure:figure2}(b), a partition
$(U_1,\ldots,U_m)$ of a graph $G$ is an \emph{$H$-partition} of $G$
for an undirected graph $H$ with $V(H)=\{U_1,\ldots,U_m\}$ if the next
statements hold for any distinct indices $i$ and $j$ in $[m]$:
\begin{enumerate}[label=\emph{H\arabic*:}, ref={H\arabic*}, leftmargin=*]
\item 
\label{condition:K1}
$|U_i|=O(n/m)$.

\item
\label{condition:K2}
$U_iU_j$ is an edge of $H$ if and only if the subsets $U_i$ and $U_j$
of $V(G)$ are adjacent in $G$.
\end{enumerate}

\begin{lemma}[{Reed and Wood~\cite[Lemma~4.1]{ReedW09}}]
\label{lemma:lemma2}
For an $n$-vertex graph $G$ with $\eta(G)=O(1)$ and an $m=O(n)$, it
takes $O(n)$ time to obtain an $H$-partition of $G$ for an
$O(m)$-vertex graph $H$ with $\eta(H)=O(1)$.
\end{lemma}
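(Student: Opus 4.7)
The plan is to partition $V(G)$ into $m$ connected blocks of size $O(n/m)$, which makes the quotient graph $H$ a contraction minor of $G$ and so $\eta(H)\leq \eta(G)=O(1)$ automatically, simultaneously delivering condition \ref{condition:K2}.

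First I would exploit sparsity: since $\eta(G)=O(1)$ implies $|E(G)|=O(n)$ (noted in the concepts paragraph), I can compute a spanning forest $F$ of $G$ in $O(n)$ time by BFS or DFS on each connected component. Then I apply a classical bottom-up tree-partitioning procedure to $F$: in a post-order traversal, I maintain at each node the number of unassigned vertices in its subtree; whenever this count reaches $\lceil n/m\rceil$, I commit that set of vertices as a new block $U_i$ and reset the count. Any unassigned remainder left at the root of a tree of $F$ is merged into the most recently committed block in the same tree, or forms its own block if the tree is too small to have produced one. This yields $O(m)$ blocks, each of size at most $2\lceil n/m\rceil = O(n/m)$, so condition \ref{condition:K1} holds; furthermore each block induces a subtree of $F$ and is hence connected in $G$.

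Next I would form $H$ on vertex set $\{U_1,\ldots,U_m\}$ with edges as required by \ref{condition:K2}. Connectivity of each $U_i$ in $G$ means that contracting each $U_i$ to a single vertex produces exactly $H$, so $H$ is a minor of $G$ and $\eta(H)\leq \eta(G)=O(1)$. To list $E(H)$ in linear time I scan $E(G)$ once: for each edge with endpoints in distinct blocks $U_i,U_j$ I emit the pair $(\min(i,j),\max(i,j))$; since $|E(G)|=O(n)$ and indices lie in $[m]\subseteq[O(n)]$, I then deduplicate the emitted pairs with a two-pass bucket/radix sort on the coordinates, in $O(n)$ deterministic time under the unit-cost RAM model. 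Sparsity of bounded-Hadwiger graphs bounds the final output by $|E(H)|=O(m)$.

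The main obstacle is arranging all three desiderata at once — connectedness of each $U_i$ (to transfer the Hadwiger bound through contraction), the exact size cap $O(n/m)$, and strict $O(n)$ deterministic time — and the tree-partitioning step is the single device that handles all three, since a tree is trivially cut into connected pieces of any prescribed size without backtracking. The only technical care needed is in the deduplication phase, which must avoid hashing and rely on radix sort over the index range $[m]$.
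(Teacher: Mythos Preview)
The paper does not prove this lemma itself; it quotes Reed and Wood's result as a black box. That said, your proposed argument has a real gap in the tree-partitioning step, and in fact the overall strategy cannot succeed.

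Your claim that the post-order procedure yields blocks of size at most $2\lceil n/m\rceil$ that each induce a subtree of $F$ fails once the spanning tree has a high-degree vertex. Take $G=K_{1,n-1}$, so $\eta(G)=2$, and root the spanning tree (which is $G$ itself) at the center. Every leaf has subtree-count $1<\lceil n/m\rceil$, so nothing is committed at any leaf; at the center the count jumps to $n$ and you commit a single block of size $n$, violating the size bound for every $m>2$. If instead you read the procedure with a global running counter (commit every $\lceil n/m\rceil$ vertices in traversal order), the committed blocks are independent sets of leaves, not subtrees, and your contraction-minor argument for $\eta(H)\leq\eta(G)$ no longer applies.

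More fundamentally, no implementation of the connected-parts strategy can work. In the star, every connected vertex set either contains the center or is a singleton leaf, so at most one part can have size exceeding $1$; a partition into $O(m)$ connected parts each of size $O(n/m)$ therefore forces $m=O(1)$ or $m=\Omega(n)$, ruling out exactly the regime $m=\Theta(n/\log n)$ used in the proof of Lemma~\ref{lemma:lemma1}. Reed and Wood's construction does not proceed by chunking a spanning tree into connected pieces; obtaining $\eta(H)=O(1)$ requires a different mechanism than the one your proposal supplies.
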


A partition $(A,B,C)$ of an $m$-vertex graph $H$ is \emph{balanced} if
$\max(|A|,|B|)\leq 2m/3$, $|C|=O(m^{2/3})$, and $A$ and $B$ are
nonadjacent in $G$.

\begin{lemma}[{Reed and Wood~\cite[Theorem~1.2]{ReedW09}}]
\label{lemma:lemma3}
It takes $O(m)$ time to obtain a balanced partition of an $m$-vertex
graph $H$ with $\eta(H)=O(1)$.
\end{lemma}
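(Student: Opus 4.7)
The plan is to prove Lemma~\ref{lemma:lemma3} by adapting the classical Lipton--Tarjan BFS-layer separator technique to the regime of graphs with bounded Hadwiger number, exploiting the density bound $|E(H)|=O(m)$ that follows from $\eta(H)=O(1)$ via the Kostochka--Thomason bound $|E(H)|=O(|V(H)|\cdot\eta(H)\sqrt{\log\eta(H)})$ noted earlier in the excerpt. First I would run BFS from an arbitrary root of $H$ in $O(m)$ time, partitioning $V(H)$ into layers $L_0,L_1,\ldots,L_d$ together with prefix sums $w_i=|L_0|+\cdots+|L_i|$. The crucial structural feature is that edges of $H$ only connect vertices within the same or consecutive layers, so for any index $i$, the sets $A=L_0\cup\cdots\cup L_{i-1}$ and $B=L_{i+1}\cup\cdots\cup L_d$ are automatically nonadjacent in $H$ once $C=L_i$ is removed. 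Thus a single BFS layer already yields a valid ``nonadjacent'' partition $(A,B,C)$, and the task reduces to choosing the index $i$ to satisfy the size and balance constraints.

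Let $i^*$ be the median index with $w_{i^*-1}<m/2\le w_{i^*}$. The base choice $i=i^*$ already gives $|A|,|B|\le m/2$, leaving a slack of $m/6$ on each side before the $2m/3$ balance bound is violated. The remaining task is to find a layer $L_i$ near $i^*$ with $|L_i|=O(m^{2/3})$ such that the resulting shift of the cut stays within the slack. I split into two regimes. In the \emph{deep} regime, the BFS depth $d$ is at least $m^{1/3}$, so an averaging argument over the layers inside the slack window around $i^*$ forces at least one layer of size $O(m^{2/3})$: if every layer in the window had size exceeding $m^{2/3}$, the cumulative size of the window on one side would exceed $(m^{1/3}/6)\cdot m^{2/3}=m^{5/3}/6$, contradicting the $m/6$ cumulative bound; I take this layer as $C$. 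In the \emph{shallow} regime, where $d<m^{1/3}$ and a single layer may be too thick, I recurse on the heaviest layer $L_{i^*}$, which induces a subgraph with $\eta\le\eta(H)=O(1)$ on at most $m$ vertices, running a secondary BFS rooted at a vertex of $L_{i^*}$ to extract an $O(m^{2/3})$-size internal separator; combining this with the layer boundaries yields a separator of the required size.

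The hard part will be making the averaging in the deep regime robust to ``moderately heavy'' layers and controlling the recursion in the shallow regime so that both bounds hold simultaneously. The delicate point is that averaging in a window of fixed cumulative size $m/6$ yields a light layer only if the window is wide enough in number of layers; if many layers are moderately thick ($\ll m^{2/3}$ but nonnegligible), the window may still be narrow and the averaging may fail outright. I expect to resolve this by \emph{bucketing}: partition the $d$ layers into $\Theta(m^{1/3})$ contiguous buckets of approximately equal cumulative size $\Theta(m^{2/3})$ each, then select the lightest single layer inside the bucket closest to the median; this simultaneously bounds the shift (one bucket-width of cumulative size) and the separator size (a lightest-of-a-bucket argument). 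For the shallow-regime recursion, I would verify that the recursive call is made on a strictly smaller instance and occurs at most once per level, so the total time telescopes as $O(m)+O(2m/3)+\cdots=O(m)$. Every step (BFS, prefix-sum scan, bucket search, one-level recursion) reads each vertex and edge a constant number of times, so the overall running time is $O(m)$.
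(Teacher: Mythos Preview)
First, a framing note: the paper does not prove this lemma at all---it is quoted directly from Reed and Wood~\cite[Theorem~1.2]{ReedW09} and used as a black box in the proof of Lemma~\ref{lemma:lemma1}. So there is no in-paper argument to compare against, only the Reed--Wood construction itself.

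Your BFS-layer plan has a genuine gap, already visible on trees. Take the tree $H$ consisting of a path $v_0v_1\cdots v_{k-2}$ with $k=\lceil m^{1/3}\rceil$, where $v_{k-2}$ is adjacent to $s\approx m/2$ further vertices $u_1,\ldots,u_s$ and each $u_j$ carries one pendant leaf $w_j$. BFS from $v_0$ has depth $d=k=m^{1/3}$ (your ``deep regime''), with $|L_0|=\cdots=|L_{k-2}|=1$ and $|L_{k-1}|=|L_k|=s\approx m/2$. The only indices $i$ for which removing $L_i$ leaves both sides of size at most $2m/3$ are $i\in\{k-1,k\}$, and both of those layers have $\Theta(m)$ vertices; every lighter layer $L_i$ with $i\le k-2$ leaves $|B|\ge m-(k-1)>2m/3$. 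Your averaging step silently assumes the slack window around the median contains $\Omega(m^{1/3})$ layers---here it contains exactly two. The bucketing patch is likewise ill-defined once a single layer already exceeds the intended bucket size $m^{2/3}$.

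The shallow-regime recursion does not rescue this. Recursing on $H[L_{k-1}]$ hands you an independent set, and in general lifting a separator of $H[L_{i^*}]$ back to $H$ forces you to absorb the neighbouring layers $L_{i^*-1}$ and $L_{i^*+1}$, each of which can have $\Theta(m)$ vertices. The Lipton--Tarjan template closes only for \emph{planar} graphs because, after sandwiching between two light layers, one passes to a bounded-radius planar minor and extracts a short \emph{fundamental-cycle} separator---a step with no analogue for arbitrary $K_t$-minor-free graphs. Reed and Wood's linear-time construction is not layer-based: it greedily partitions $V(H)$ into connected pieces of size $O(m^{1/3})$, contracts them to obtain a $K_t$-minor-free minor on $O(m^{2/3})$ vertices (this is essentially Lemma~\ref{lemma:lemma2} with a different parameter), and reads the separator off from the $O(m^{2/3})$-edge density bound on that minor. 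You would need to replace the BFS scaffolding with that contraction idea.
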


\begin{proof}[Proof of Lemma~\ref{lemma:lemma1}] 
Apply Lemma~\ref{lemma:lemma2} to obtain an undirected graph $H$ with
$m=|V(H)|=\Theta({n}/{\log n})$ and $\eta(H)=O(1)$ and an
$H$-partition $(U_1,\ldots,U_m)$ of $G$ in $O(n)$ time.  Within the
proof, we call a member of $V(G)$ a \emph{vertex} of $G$ and call a
member of $V(H)$ a \emph{node} of $H$.  Hence, each node $U_j$ of $H$
with $j\in[m]$ is a set of $O(\log n)$ vertices of~$G$.  For each
vertex $v$ of $G$, let $j_v\in [m]$ denote the index with $v\in
U_{j_v}$.  Let $b=\lceil\log^2 n\rceil$.  Let $T$ be the rooted
ordered binary tree on node subsets of $V(H)$ returned by the
following recursive procedure on $H$:
\begin{itemize}
\item If $|V(H)|\leq b^2$, then return
$V(H)$ as a leaf of $T$. 
\item
If $|V(H)|>b^2$, then obtain a balanced partition $(A,B,C)$ of $H$
in $O(|V(H)|)$ time 
by Lemma~\ref{lemma:lemma3}
and return the
subtree $T(C)$ of $T$ rooted at $C$ whose left (respectively, right)
subtree is the one returned by the procedure call on $H[A\cup C]$
(respectively, $H[B\cup C]$).
\end{itemize}
Let $W_1,\ldots,W_p$ be the leaves of $T$.  Note that each $W_i$ with
$i\in[p]$ is a node subset of $V(H)$.  Let each $V_i$ with $i\in[p]$
consist of the vertices $v\in V(G)$ with $|N_G(v)|\leq b$ such that
$W_i$ is the unique leaf of $T$ with $U_{j_v}\in W_i$.  Let
$V_0=V(G)\setminus (V_1\cup \cdots\cup V_p)$.  Thus, a vertex $v$ of
$G$ belongs to $V_0$ if and only if $|N_G(v)|>b$ or $U_{j_v}\in C$
holds for a nonleaf member $C\in V(T)$.  The rest of the proof shows
that the above procedure runs in $O(n)$ time and $(V_0,\ldots,V_p)$ is
a star partition of $G$.

To show that $(V_0,\ldots,V_p)$ is obtained in $O(n)$ time, let each
$\Psi(C)\subseteq V(H)$ with $C\in V(T)$ denote the union of the
leaves of the subtree $T(C)$ of $T$ rooted at $C$.  Hence,
$\Psi(C)=V(H)$ for the root $C$ of $T$ and $\Psi(C)=C$ for each leaf
$C$ of $T$.  If $X$ and $Y$ are the children of a $C\in V(T)$ in $T$,
then $(\Psi(X)\setminus C,\Psi(Y)\setminus C,C)$ is a balanced
partition of $H[\Psi(C)]$ obtained in $O(|\Psi(C)|)$ time.  Let
$\Lambda_0$ consist of the leaves of $T$.  Let $m_0=\sum_{C\in
  \Lambda_0}|C|$.  For each of the $O(\log m_0)$ levels of $T$, the
balanced partitions of $H[\Psi(C)]$ for all $C\in V(T)$ in the same
level of $T$ take overall $O(m_0)$ time.  Thus, $T$ is obtained in
$O(m_0\log m_0)$ time.  Observe that if $C\in V(T)$, then $C\in
\Lambda_0$ if and only if $|\Psi(C)|\leq b^2$.  Let each $\Lambda_i$
with $i\geq 1$ consist of the $C\in V(T)$ with
\[
b^2\cdot (1.5)^{i-1}< |\Psi(C)|\leq b^2\cdot(1.5)^i.
\] 
If $X$ and $Y$ are distinct members of $\Lambda_i$ such that $X$ is an
ancestor of $Y$ in $T$, then the distance of $X$ and $Y$ in $T$ is
$O(1)$. Thus, \(\sum_{C\in \Lambda_i}|\Psi(C)|=O(m_0)\) holds for each
$i\geq 1$, implying \(|\Lambda_i|=O(\frac{m_0}{b^2}\cdot
(\frac{2}{3})^i)\) by $b^2\cdot (1.5)^{i-1}< |\Psi(C)|$.  Each $C\in
\Lambda_i$ with $i\geq 1$ admits a balanced partition $(A,B,C)$ of
$H[\Psi(C)]$, implying $|C|=O(b^{4/3}\cdot (1.5)^{2i/3})$ by
$|\Psi(C)|\leq b^2\cdot(1.5)^i$.  Since each $C\in V(T)\setminus
\Lambda_0$ contributes $|C|$ to the nonnegative difference $m_0-m$, we
have
\begin{equation}
\label{equation:equation1}
m_0-m
=\sum_{i\geq 1}\sum_{C\in \Lambda_i}|C|
=\sum_{i\geq 1}|\Lambda_i|\cdot O(b^{4/3}\cdot (1.5)^{2i/3})
=O\left(\frac{m_0}{b^{2/3}}\right)\cdot \sum_{i\geq 1} \left(\frac{2}{3}\right)^{i/3}
=O\left(\frac{m_0}{\log^{4/3}n}\right).
\end{equation}
By $m=m_0-o(m_0)$, we have $m_0=\Theta(m)$. Thus, the time of
computing $T$ is $O(m_0\log m_0)=O(m\log m)=O(n)$, implying that
$(V_0,\ldots,V_p)$ can be obtained in $O(n)$ time.

We now show that $(V_0,\ldots,V_p)$ is a star partition of $G$.  To
see Condition~\ref{condition:S1}, let $v\in V_i$ with $i\in[p]$.  By
definition of $T$, the leaf $W_i$ of $T$ containing $U_{j_v}$ also
contains all neighbors of $U_{j_v}$ in $H$.  The union of the nodes of
$H$ in $W_i$ is a subset of $V_0\cup V_i$.  Since $(U_1,\ldots,U_m)$
is an $H$-partition of $G$, we have $N_G(v)\subseteq V_0\cup V_i$.
Thus, $V_1,\ldots,V_p$ are pairwise nonadjacent in $G$.  To see
Condition~\ref{condition:S2}, we have $|V_i|=O(\log^5 n)$ by
$|W_i|\leq b^2$ for each $i\in [p]$ and $|U_j|=O(\log n)$ for each
$j\in[m]$.  By $|N_G(v)|\leq b$ for each $v\in V_i$, we have
$|N_G[V_i]|=O(\log^7 n)=\text{poly}(\log n)$.  As for
Condition~\ref{condition:S3}, $\eta(G)=O(1)$ implies that the number
of vertices $v$ with $|N_G(v)|>b$ is $O(n/b)$.  By
Equation~\eqref{equation:equation1}, we have $\sum_{C\in V(T)\setminus
  \Lambda_0}|C|=O(n/\log^{7/3}n)$.  Thus, $|V_0|=O(n/\log^2
n)+O(n/\log^{4/3} n)=O(n/\log^{4/3} n)$.  By definition of $T$,
$|W_i|=\Omega(b^2)$ holds for each $i\in[p]$. By
$\sum_{i\in[p]}|W_i|=m_0=O(m)$, we have $p=O(m/b^2)=O(n/\log^{4/3}
n)$.  Each $C\in V(T)\setminus \Lambda_0$ contributes $O(|C|\cdot \log
n)$ to the sum $\sum_{i\in[p]}|N_G(V_i)|$. By
Equation~\eqref{equation:equation1}, we have
\[
\sum_{i\in[p]}|N_G(V_i)|=\sum_{C\in V(T)\setminus \Lambda_0}|C|\cdot O(\log
n)=O(m_0/\log^{1/3}n)=O(n/\log^{4/3} n).
\]
Therefore, $|V_0|+p+\sum_{i\in [p]} |N_G(V_i)|=O(n/\log^{4/3} n)$.
\end{proof}

\subsection{The encoding algorithm}
\label{subsection:subsection2.2}

A \emph{concatenation prefix} of $p$ binary strings $X_1,\ldots,X_p$
having overall $n$ bits is an $O(p\log n)$-bit $O(n)$-time computable
string $\chi$ such that it takes $O(1)$ time to obtain from the
concatenation $X$ of $\chi,X_1,\ldots,X_p$ the starting position of
each $X_i$ with $i\in [p]$ in $X$.  For instance, such a $\chi$ can be
the concatenation of strings $\chi_{-1},\chi_0,\chi_1,\ldots,\chi_p$,
each having exactly $b=1+\lceil \log n\rceil$ bits, where (1)
$\chi_{-1}$ is $0^{b-1}1$, (2) $\chi_0$ is the binary representation
of $p$, and (3) each $\chi_i$ with $i\in[p]$ is the binary
representation of the starting position of $X_i$ in the concatenation
of $X_1,\ldots,X_p$.  Thus, it takes $O(1)$ time to obtain $b$ from
the first $O(1)$ words of $X$.  It then takes $O(1)$ time to obtain
$p$ and the starting position of each $X_i$ with $i\in[p]$ in $X$.
The \emph{prefixed concatenation} of binary strings $X_1,\ldots,X_p$
is the concatenation $X$ of $\chi,X_1,\ldots,X_p$ for a concatenation
prefix $\chi$ of $X_1,\ldots,X_p$. When $p=o(n/\log n)$, the prefixed
concatenation of $X_1,\ldots,X_p$ has $n+o(n)$ bits.

The encoding algorithm of our $\cF$-optimal encoding scheme $A^*$ on
an input $n$-vertex graph $G$ has the following three phases: (1) The
first phase computes a decomposition tree $T$ for $G$ via
Lemma~\ref{lemma:lemma1}.  (2) The second phase computes a code-book
string $\chi$ for the subgraphs of $G$ at the leaves of $T$.  (3) The
third phase computes an encoded string $\textit{code}(G)$ for $G$ in a
bottom-up manner along $T$.  The base encoded string
$X_{\textit{base}}$ reported by the encoding algorithm of $A^*$ on $G$
is the prefixed concatenation of $\chi$ and $\textit{code}(G)$.

\paragraph{Phase~1} 
The first phase constructs a height-$O(1)$ decomposition tree $T$ for
$G$ such that (i) each member of $V(T)$, called a \emph{node} of $T$,
is a subgraph of $G$, (ii) the union of all nodes of $T$ is $G$, and
(iii) a node of $T$ is a leaf of $T$ if and only if it has at most
$\ell= \lceil\log \log n\rceil$ vertices.  For each subgraph $H$ of
$G$, let
\[
\partial H=N_G(V(G)\setminus V(H)).
\]
Let the initial tree $T$ consist of the single node $G$.  We
iteratively update $T$ until each leaf node $H$ of $T$ has at most
$\ell$ vertices. The round for a leaf node $H$ of $T$ with
$|V(H)|>\ell$ performs the following:
\begin{enumerate}
\item
If $G$ is equipped with a genus-$O(1)$ embedding that is required to
be recovered together with $G$ by the decoding algorithm, then obtain
a graph $\Delta_H$ from $H$ by adding edges to triangulate each face
of $H$ according to the induced embedding of $H$.  The genus and hence
the Hadwiger number of $\Delta_H$ remain $O(1)$.  Obtain a star
partition $(V_0,\ldots,V_p)$ of $\Delta_H$ in $O(|V(H)|)$ time by
Lemma~\ref{lemma:lemma1}.\footnote{We can only recover a given
genus-$O(1)$ embedding of $G$: $H$ is triangulated into $\Delta_H$ to
ensure the $\cF$-succinctness of the encoded string. The genus of
$\Delta_H$ remains $O(1)$, so $\Delta_H$ belongs to a minor-closed
class of graphs. By $\eta(\Delta_H)=O(1)$, $\Delta_H$ admits an
$O(|V(H)|)$-time obtainable star partition. If the given embedding of
$G$ has genus $\omega(1)$, then we do not know even know whether
$\Delta_H$ admits a star partition.}  Otherwise, obtain a star
partition $(V_0,\ldots,V_p)$ of $H$ in $O(|V(H)|)$ time by
Lemma~\ref{lemma:lemma1}.

\item 
Let $U_0=V_0\cup \partial H$ and $U_i=V_i\setminus \partial H$ for
each $i\in[p]$.  We have $N_H(U_i)\subseteq V_0\cup \partial
H\subseteq U_0$ and $H(U_i)=G(U_i)$ for each $i\in[p]$.  See
Figure~\ref{figure:figure3} for an illustration.  We call
$(U_0,\ldots,U_p)$ the \emph{$T$-partition} of $H$. Let $H_0=H[U_0]$
and $H_i=H(U_i)$ for each $i\in[p]$.  We call $(H_0,\ldots,H_p)$ the
\emph{$T$-subgraphs} of $H$.  Replace the leaf node $H$ of $T$ by a
height-$1$ subtree of $T$ rooted at $H_0$ whose $i$-th child with
$i\in[p]$ is $H_i$.
\end{enumerate}

\begin{figure}
\centerline{\scalebox{0.4}{\includegraphics{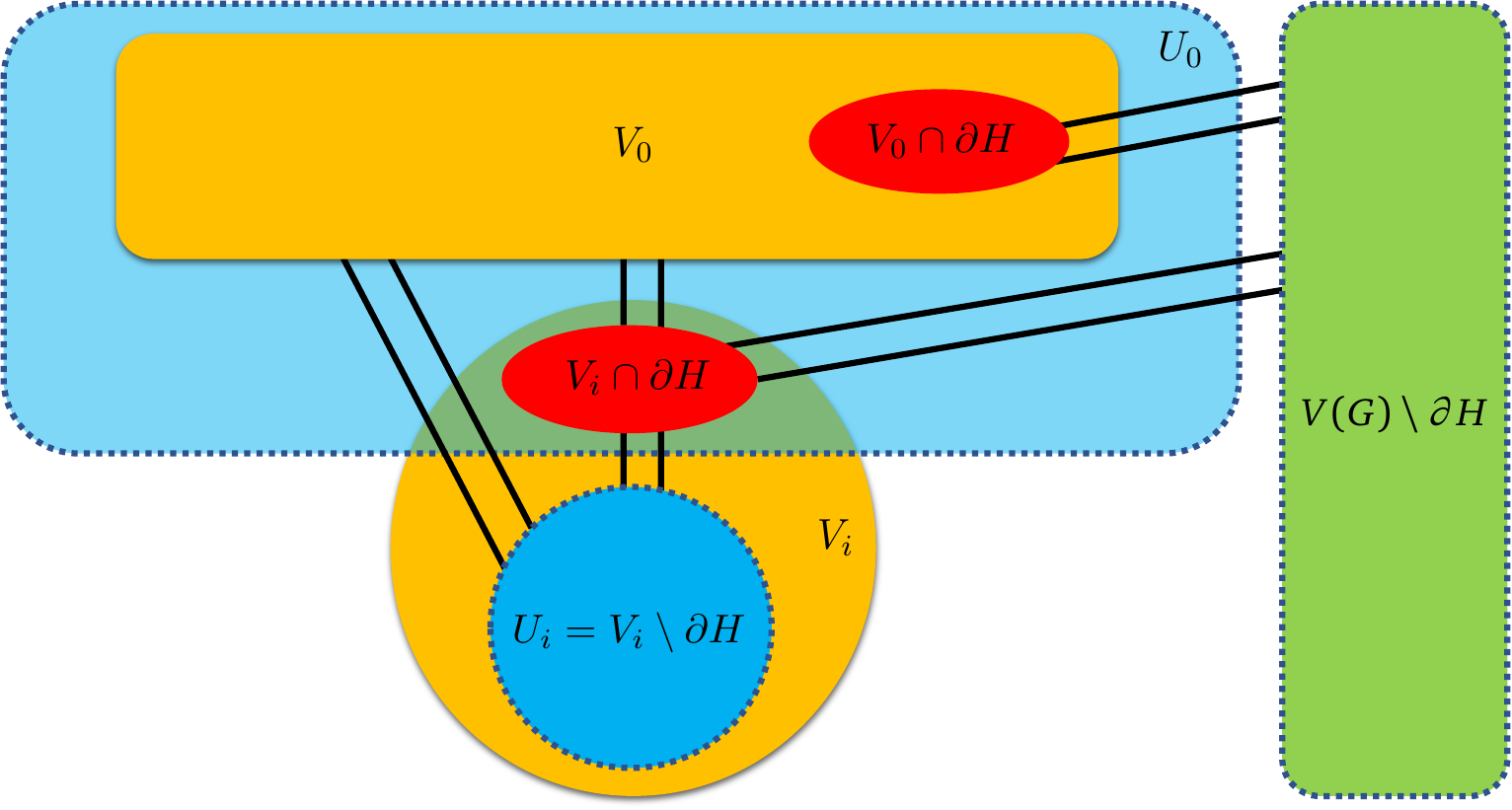}}}
\caption{An illustration for $U_0$ and a $U_i$ with $i\in[p]$.}
\label{figure:figure3}
\end{figure}

Each edge of $G$ belongs to exactly one node of $T$, but a vertex of
$G$ may belong to more than one nodes of $T$.  By
Conditions~\ref{condition:S1} and~\ref{condition:S3} of
$(V_0,\ldots,V_p)$, each of the above rounds increases the overall
number of vertices in all nodes of $T$ by
\begin{equation}
\label{equation:equation2}
\sum_{i\in [p]}|N_H(U_i)|\leq \sum_{i\in[p]}
|N_H(V_i)|+|V_i\cap\partial H|\leq |\partial
H|+O(|V(H)|/\log^{4/3}|V(H)|).
\end{equation}
By Condition~\ref{condition:S2} of $(V_0,\ldots,V_p)$, the height of
$T$ is $O(1)$.  By $\partial H_i\subseteq N_H(U_i)$ for each $i\in[p]$
and $\partial G=\varnothing$, the overall number of vertices in all
nodes of $T$ is $n+o(n)$.  Phase~1 runs in $O(n)$ time.

A subtree of $T$ is denoted $T_H$ for a subgraph $H$ of $G$ if $H$ is
the union of the nodes of $T_H$. Thus, $T_H=H$ for each leaf node $H$
of $T$ and $T_G=T$.  Note that distinct subtrees $T_H$ and $T_{H'}$ of
$T$ may have isomorphic subgraphs $H$ and $H'$ of $G$, but their
$|\partial H|$ and $|\partial H'|$ can still be different.

\paragraph{Phase~2}
For each positive integer $k\leq \ell$, let $\Lambda_k$ consist of the
leaf nodes $H$ of $T$ with $|V(H)|\leq k$.  Let each $\Lambda^*_k$
with $k\in[\ell]$ consist of each graph $H$ in $\Lambda_k$ such that
at least one of its occurrences in $G$ identified by the tree
structure of $T$ admits a vertex subset $U\subseteq V(H)$ with
$H=G(U)$ and $|N_G(U)| \leq |V(H)|/\log^2 |V(H)|$.  Thus, each graph
in $\Lambda^*_k$ is an $O(k/\log^2 k)$-quasi-member of each slim class
$\cC$ that contains $G$.  By $O(1)^{\textit{poly}(\ell)}=o(n)$, it
takes $O(n)$ time to
\begin{itemize}
\item 
design for each distinct $k$-vertex graph $H$ in $\Lambda_\ell$ a
unique encoded string $\textit{code}(H)$ having at most $\log
|\Lambda_k|+O(1)$ bits such that if $H$ is in $\Lambda^*_\ell$, then
$\textit{code}(H)$ has at most $\log |\Lambda^*_k|+O(1)$ bits and

\item
construct an $o(n)$-bit code-book string $\chi$ with which each graph
$H\in \Lambda_\ell$ and $\textit{code}(H)$ can be obtained from each
other in $O(1)$ time.
\end{itemize}
Note that whether two leaf nodes of $T$ are considered distinct
elements of $\Lambda_\ell$ depends on whether they are isomorphic and
equipped with the same additional information to be recovered by the
decoding algorithm.  Two distinct subtree $T_H$ and $T_{H'}$ with
isomorphic leaf nodes $H$ and $H'$ of $T$ may have different
$|\partial H|$ and $|\partial H'|$, though.  The higher complexity the
additional information has, the longer the code-book string $\chi$ and
the encoded strings $\textit{code}(H)$ for the tiny subgraphs $H$ of
$G$ in $\Lambda_\ell$ become.  The base encoded string
$X_{\textit{base}}$ for the input graph $G$ is the prefixed
concatenation of the $o(n)$-bit code-book string $\chi$ and the
$\cF$-succinct encoded string $\textit{code}(G)$ for $G$ to be
computed in the third phase.

\paragraph{Phase~3}
The encoded string $\textit{code}(G)$ is defined recursively for each
subtree $T_H$ of $G$: If $|V(H)|\leq \ell$, then let
$\textit{code}(H)$ be as defined in the code-book string $\chi$ such
that $\textit{code}(H)$ and $H$ can be obtained from each other in
$O(1)$ time via $\chi$. Otherwise, let $(U_0,\ldots,U_p)$ and
$(H_0,\ldots,H_p)$ be the $T$-partition and $T$-subgraphs of $H$.  We
already have each $\textit{code}(H_i)$ with $i\in[p]$, since $T_{H_i}$
is the $i$-th subtree of $T_H$.  The encoded string $\textit{code}(H)$
for $H$ is the prefixed concatenation of
$\textit{code}(H_0),\ldots,\textit{code}(H_p)$, where
$\textit{code}(H_0)$ is the following $O(|V(H)|)$-time computable
$O(|U_0|\cdot \log |V(H)|)+o(|V(H)|)$-bit string with which the
decoding algorithm can recover $H$ and its equipped information from
the subgraphs $H_i$ with $i\in[p]$ and their equipped information in
$O(|V(H)|)$ time.
\begin{itemize}
\item 
Each edge of $H_0$ and each duplicated copy of each vertex of $U_0$ in
$H_i$ with $i\in[p]$ can be represented using $O(\log|V(H)|)$ bits. By
$\eta(H)=O(1)$, we have $|E(H_0)|=O(|U_0|)=O(|V_0|+|\partial H|)$.  By
Equation~\eqref{equation:equation2}, we have
\[
\sum_{i\in[p]}|N_H(U_i)|\leq |U_0|+O(|V(H)|/\log^{4/3}|V(H)|).
\] 
Thus, $H$ together with its equipped information like vertex and edge
coloring and orientation can be recovered from $H_1,\ldots,H_p$ and
their equipped information using an $O(|U_0|\cdot \log
|V(H)|)+o(|V(H)|)$-bit string $\textit{code}(H_0)$.

\item
For the case that $G$ is equipped with a genus-$O(1)$ embedding
required to be recovered together with $G$ by the decoding algorithm,
consider the clockwise order of the incident edges of each vertex $u$
in $H$ around $u$ according to the induced embedding of $H$.
\begin{itemize}
\item
If $u\in U_0$, then the induced embedding of each $H_i$ with $i\in[p]$
preserves the induced order of the incident edges of $u$ in $H_i$
around $u$.
\item
If $u\in U_i$ with $i\in [p]$, then $H_i$ contains all incident edges
of $u$, implying that the induced embedding of $H_i$ preserves their
order around $u$.
\end{itemize}
Hence, $\textit{code}(H_0)$ uses $O(\log |V(H)|)$ bits to encode
$H[\{u,v,w\}]$ for every triple of vertices $u\in U_0$, $v\in U_j$,
and $w\in U_k$ with $jk=0$ or $j\ne k$ such that $\vec{uv}$ or
$\vec{vu}$ immediately precedes $\vec{uw}$ or $\vec{wu}$ around $u$ in
clockwise order around $u$ according to the induced embedding of $H$.
\begin{itemize}
\item 
The number of such triples $(u,v,w)$ with $jk=0$ is $O(|U_0|)$, since
each of the $O(|U_0|)$ edges of $H_0$ belongs to $\Theta(1)$ such
subgraphs $H_0[\{u,v,w\}]$.
\item
The number of such triples $(u,v,w)$ with $jk\ne 0$ and $j\ne k$ is
also $O(|U_0|)$: Vertices sets $U_j\subseteq V_j$ and $U_k\subseteq
V_k$ with $jk\ne 0$ and $j\ne k$ are non-adjacent in the triangulated
version $\Delta_H$ of $H$ by Condition~\ref{condition:S1} of
$(V_0,\ldots,V_p)$, implying that one of the $O(|U_0|)$ incident edges
of $u$ in $\Delta_H[U_0]$ succeeds $\vec{uv}$ or $\vec{vu}$ and
precedes $\vec{uw}$ or $\vec{wu}$ in the clockwise order around $u$
according to the embedding of $\Delta_H$.
\end{itemize}
\end{itemize}
Thus, $\textit{code}(H_0)$ has $O(|U_0|\cdot\log |V(H)|)+o(|V(H)|)$
bits.  Since the height of $T$ is $O(1)$ and the overall number of
vertices in all nodes of $T$ is $n+o(n)$, the third phase also runs in
$O(n)$ time.

\paragraph{Encoding size}
A function $f$ is \emph{super-additive} and
\emph{continuous}~\cite{HeKL00} if $f(n_1)+f(n_2) \leq f(n_1 +n_2)$
and $f(n + o(n)) = f(n) + o(f(n))$, respectively.  For example, $f(n)
= n^a \log ^b n$ for any constants $a \geq 1$ and $b \geq 0$ is
continuous and super-additive.  Super-additivity and continuity are
both closed under additions.  By
$\|X_{\textit{base}}\|=\|\textit{code}(G)\|+o(n)$ and $\log
|\cC_n|=\Theta(n)$ for each class $\cC\in \cF$, we ensure the
$\cF$-succinctness of $X_{\textit{base}}$ by proving
\begin{equation}
\label{equation:equation3}
\|\textit{code}(G)\|\leq f(n)+o(f(n))
\end{equation}
for each continuous super-additive function $f$ and each slim class
$\cC$ containing the input $n$-vertex graph $G$ that satisfy
$\log|\cC_n|\leq f(n)+o(f(n))$.

We first ready
Equations~\eqref{equation:equation4},~\eqref{equation:equation5},
and~\eqref{equation:equation6} below that are needed to prove
Equation~\eqref{equation:equation3}.  For each subtree $T_H$ of $T$
with $k=|V(H)|>\ell$, the encoded string $\textit{code}(H)$ is the
prefixed concatenation of
$\textit{code}(H_0),\ldots,\textit{code}(H_p)$ for the $T$-subgraphs
$(H_0,\ldots,H_p)$ of $H$.  For the $T$-partition $(U_0,\ldots,U_p)$
of $H$ and its corresponding star partition $(V_0,\ldots,V_p)$, we
have $U_0=V_0\cup \partial H$ and for each $i\in[p]$
\begin{align*}
V(H_i)&=U_i\cup N_H(U_i)\\
\partial H_i\cup N_H(U_i)&\subseteq N_H(V_i)\cup (V_i\cap \partial H).
\end{align*}
By $\|\textit{code}(H_0)\|=O(|U_0|\cdot\log k)+o(k)$ and
Conditions~\ref{condition:S1} and~\ref{condition:S3} of
$(V_0,\ldots,V_p)$, we have
\begin{align}
\|\textit{code}(H_0)\|
&=O(|\partial H|\cdot \log k)+o(k)\nonumber\\
|\partial H_1|+\cdots+|\partial H_p|
&\leq |\partial H|+o(k/\log k)\label{equation:equation4}\\
|V(H_1)|+\cdots+|V(H_p)|&\leq |\partial H|+k+o(k/\log k).\nonumber
\end{align}
We can then prove for each subtree $T_H$ of $T$, no matter whether
$k=|V(H)|$ is more than $\ell$ or not,
\begin{equation}
\label{equation:equation5}
\|\textit{code}(H)\|=O(k+|\partial H|\cdot \log k)
\end{equation}
by induction on the bounded height of $T_H$: If $k\leq \ell$, then
$\|\textit{code}(H)\|\leq\log |\Lambda_k|+O(1)=O(k)$ by
$\eta(H)=O(1)$.  If $k>\ell$, then Equation~\eqref{equation:equation4}
and $p=o(k/\log k)$ imply
\[
\|\textit{code}(H)\|
=O(|\partial H|\cdot\log k)+o(k)+\sum_{i\in[p]}O(|V(H_i)|+|\partial H_i|\cdot\log |V(H_i)|)
=O(k+|\partial H|\cdot\log k)
\]
by the inductive hypothesis. Equation~\eqref{equation:equation5} is
proved.  Each $k$-vertex graph $H$ in $\Lambda^*_k$ can be represented
by an encoding of a $k+o(k/\log k)$-vertex graph $H'$ in $\cC$ and an
$o(k)$-bit string specifying a set $U'\subseteq V(H')$ of $o(k/\log
k)$ vertices with $H=H'-U'$.  We have
\begin{align}
\log |\Lambda^*_k|&\leq \log |\cC_{k+o(k)}|+o(k)\nonumber\\
&\leq f(k+o(k))+o(f(k+o(k)))+o(k)\nonumber\\
&\leq f(k)+o(f(k)),\label{equation:equation6}
\end{align}
since $f$ is continuous and super-additive.

We now prove Equation~\eqref{equation:equation3} using
Equations~\eqref{equation:equation4},~\eqref{equation:equation5},
and~\eqref{equation:equation6}.  Note that we have $\partial
G=\varnothing$, $T_G=T$, $G=G(V(G))$, and $|N_G(V(G))|=0$.  Thus, it
suffices to show for each subtree $T_H$ of $T$ with $k=|V(H)|$ that if
$H=G(U)$ holds for a vertex set $U\subseteq V(H)$ with $|N_G(U)|\leq
k/\log^2 k$, then
\begin{equation}
\label{equation:equation7}
\|\textit{code}(H)\|\leq f(k)+o(f(k)).
\end{equation}
We prove Equation~\eqref{equation:equation7} by induction on the
bounded height of $T_H$.  If $k\leq \ell$, then $H=G(U)$ belongs to
$\Lambda^*_k$ by $|N_G(U)|\leq k/\log^2 k$, implying
\[
\|\textit{code}(H)\|\leq\log |\Lambda^*_k|+O(1) \leq f(k)+o(f(k)).
\]
The basis holds. If $k>\ell$, then $H=G(U)$ implies $\partial
H\subseteq N_G(U)$.  By $|N_G(U)|\leq k/\log^2 k$, we have
\begin{equation}
\label{equation:equation8}
|\partial H|\leq k/\log^2 k.
\end{equation}
Let $(U_0,\ldots,U_p)$ and $(H_0,\ldots,H_p)$ be the $T$-partition and
the $T$-subgraphs of $H$.  By Condition~\ref{condition:S2} of the
corresponding star partition $(V_0,\ldots,V_p)$ of $H$, we have
\[
|V(H_i)|=|N_H[U_i]|\leq |N_H[V_i]|=\textit{poly}(\log k)
\]
for each $i\in[p]$.  Let $I$ consist of the indices $i\in[p]$ such
that $H_i$ admits subset $U'\subseteq V(H_i)$ with $H_i=G(U')$ and
$|N_G(U')|\leq |V(H_i)|/\log^2 |V(H_i)|$.  If $i\notin I$, then
$H_i=H(U_i)=G(U_i)$ implies
\begin{equation}
\label{equation:equation9}
|N_H(U_i)|=|N_G(U_i)|>
\frac{|V(H_i)|}{\log^2 |V(H_i)|}
=\Omega\left(\frac{|V(H_i)|}{\log^2 \log k}\right).
\end{equation}
By Equations~\eqref{equation:equation2},~\eqref{equation:equation8},
and~\eqref{equation:equation9}, we have
\begin{align}
\sum_{i\in [p]\setminus I} |V(H_i)|&=
\sum_{i\in [p]\setminus I} O(|N_H(U_i)|\cdot \log^2 \log k)\nonumber\\
&=O((|\partial H|+k/\log^{4/3}k)\cdot \log^2\log k)\nonumber\\
&=o(k/\log k).\label{equation:equation10}
\end{align}
By $f(k)=\Omega(k)$, the inductive hypothesis, and 
Equations~\eqref{equation:equation4}, \eqref{equation:equation5},
\eqref{equation:equation8},
and~\eqref{equation:equation10}, we have
\begin{align*}
\|\textit{code}(H)\|&=O(|\partial H|\cdot \log k)+o(k)+
\sum_{i\in [p]\setminus I}\|\textit{code}(H_i)\|+
\sum_{i\in I}\|\textit{code}(H_i)\|\\
&\leq o(k)+\sum_{i\in [p]\setminus I} O(|V(H_i)|+|\partial H_i|\cdot\log |V(H_i)|)+
\sum_{i\in I}f(|V(H_i)|)+o(f(|V(H_i)|))\\
&\leq o(k)+\sum_{i\in [p]} O(|\partial H_i|\cdot\log k)+
\sum_{i\in [p]}f(|V(H_i)|)+o(f(|V(H_i)|))\\
&\leq f(k)+o(f(k)).
\end{align*}
Equation~\eqref{equation:equation7} is proved, implying that
Equation~\eqref{equation:equation3} holds.  Since $X_{\textit{base}}$
is $\cC$-succinct for each slim class $\cC$ that contains $G$,
$X_{\textit{base}}$ is $\cF$-succinct. Thus, the encoded string $X$
produced by the $O(n)$-time encoding algorithm of $A^*$ for the input
$n$-vertex graph $G$ is $\cF$-succinct.

\subsection{The decoding algorithm}
\label{subsection:subsection2.3}
The decoding algorithm of $A^*$ first obtains the $o(n)$-bit code-book
string $\chi$ and the binary string $\textit{code}(G)$ for $G$ from
$X_{\textit{base}}$.  It takes $O(n)$ time to recover the
height-$O(1)$ decomposition tree $T$ of $G$ obtained by the encoding
algorithm and the string $\textit{code}(H)$ for each subtree $T_H$ of
$T$. It takes overall $O(n)$ time to obtain all subgraphs $H$ of $G$
at the leaves of $T$ from $\chi$ and $\textit{code}(H)$.  For each
non-singleton subtree $T_H$ of $T$, it takes $O(|V(H)|)$ time to
recover $H$ and its equipped additional information, if any, from
$\textit{code}(H_0)$ and the subgraphs $H(U_i)$ with $i\in[p]$ with
respect to the $T$-partition $(U_0,\ldots,U_p)$ of $H$ used by the
encoding algorithm of $A^*$ which is preserved in the tree structure
of $T_H$.  Thus, it takes $O(n)$ time to decode $X$ back to the graph
$G$ and its equipped additional information.

For the case that the embedding reflected by the input adjacency list
of $G$ need not be recovered together with $G$, the decoding algorithm
of $A^*$ reports an adjacency list of $G$ that is recursively defined
for each subtree $T_H$ of $T$ as follows.  For each distinct subgraph
$H$ of $G$ in $\Lambda_\ell$, we report the adjacency list of $H$
stored in the code-book string $\chi$. Note that this need not be the
same as the induced adjacency list of any occurrence of $H$ in $G$.
For each non-singleton subtree $T_H$ of $T$, let $(U_0,\ldots,U_p)$
and $(H_0,\ldots,H_p)$ be the $T$-partition and the $T$-subgraphs of
$H$.  For each vertex $u\in U_i$ with $i\in[p]$, the neighbor list of
$u$ in $H$ is exactly the reported neighbor list of $u$ in $H_i$.  For
each $u\in U_0$, the neighbor list of $u$ in $H$ is the concatenation
of the reported neighbor lists of $u$ in $H_0,\ldots,H_p$ in order.

\subsection{Three non-monotone slim classes of graphs}
\label{subsection:subsection2.4}
To ensure that our $\cF$-optimal encoding scheme $A^*$ indeed
overshadows all the previous work for non-monotone classes of graphs
listed in Figure~\ref{figure:figure1}, we show that the three
non-monotone classes $\cC$ of triconnected planar graphs and
triangulations and floor-plans of genus-$O(1)$ surfaces are
quasi-monotone.  Specifically, we verify that the subgraph $G(U)$ of
$G$ for each nonempty proper subset $U$ of $V(G)$ can be obtained from
a graph of $\cC$ by deleting $O(|N_G(U)|)$ vertices and their incident
edges.  Since all graphs in these three classes $\cC$ are connected,
we have $|N_G(U)|\geq 1$.
\begin{itemize}
\item
Example~1: triconnected planar graphs.  Let $H=G(U)$. Repeat the
following three steps on the current graph $H$ for three iterations:
(i) Let the current graph $H$ be embedded such that each of its
biconnected components is incident to the exterior face.  (ii) Add a
new vertex into the exterior face of $H$.  (iii) Make the new vertex a
common neighbor of all vertices on the exterior bound of the current
$H$.  The plane graph $H$ at the end of the $k$-th round is
$k$-connected.  Thus, the initial $H$ can be obtained from the final
$H\in \cC$ by deleting exactly three vertices and their incident
edges.

\item
Example~2: triangulations of a genus-$O(1)$ surface.  The boundary of
each non-triangle face $F$ of $G(U)$ contains at least two vertices of
$N_G(U)$.  Let $e_F$ be an edge between arbitrary two vertices of
$N_G(U)$ on the boundary of $F$.  The graph consisting of the edges
$e_F$ for all non-triangle faces $F$ of $G(U)$ has genus $O(1)$,
implying that the number of non-triangle faces of $G(U)$ is
$O(|N_G(U)|)$.  Thus, adding a new vertex $u_F$ to triangulate each
non-triangle face $F$ of $G(U)$ results in a triangulation of a
genus-$O(1)$ surface.  $G(U)$ can be obtained from the resulting graph
by deleting the $O(|N_G(U)|)$ new vertices $u_F$ and their incident
edges.

\item 
Example~3: floor-plans of a genus-$O(1)$ surface.  Since each vertex a
floor-plan has $O(1)$ degree, one can make $G(U)$ a floor-plan by
adding $O(|N_G(U)|)$ vertices and edges such that $G(U)$ can be
obtained from the resulting floor-plan by deleting the new vertices
and their incident edges.
\end{itemize}
Note that there is no need to modify our $A^*$ to accommodate these
non-monotone classes $\cC$ as did by the previous encoding schemes in
the literature. Since the above classes are clearly nontrivial and
their members have bounded Hadwiger numbers, we just have to prove
that they are quasi-monotone and then our $\cF$-optimal encoding
scheme $A^*$ is guaranteed to be $\cC$-optimal encoding schemes for
each of the above classes $\cC$.  One can obtain more examples this
way.

\section{The query algorithms}
\label{section:section3}
Section~\ref{subsection:subsection3.1} presents a framework for
designing an $O(n)$-time obtainable $o(n)$-bit string $X_{q}$ for a
query $q$ such that an answer to $q$ can be obtained in $O(1)$ time
from $X_{\textit{base}}$ and $X_q$.
Section~\ref{subsection:subsection3.2} applies the framework on the
query of obtaining the degree of a vertex $u$ in $G$.  Other queries
for a vertex of $G$ can be supported in a same way.
Section~\ref{subsection:subsection3.3} applies the framework on the
query of determining whether $\vec{uv}$ is an edge of $G$ for a pair
$(u,v)$ of vertices of $G$.  Other queries for a pair of adjacent
vertices of $G$ can be supported in a same way.
Section~\ref{subsection:subsection3.4} applies the framework on the
query $q_t$ of reporting a shortest $uv$-path of $G$ for a pair
$(u,v)$ of vertices $u$ and $v$ of $G$ if there is one whose length is
bounded by a prespecified $t=O(1)$.

\subsection{A framework for supporting $\boldsymbol{O(1)}$-time queries using additional $\boldsymbol{o(n)}$ bits}
\label{subsection:subsection3.1}
This subsection presents a framework for supporting a query $q$ in
$O(1)$ time using an $O(n)$-time obtainable $o(n)$-bit encoded string
$X_{q}$ which is the prefixed concatenation of (i) a string
$\chi_{\textit{label}}$ for labeling, (ii) a string
$\chi_{\textit{leaf}}$ supporting the query for the tiny graphs in
$\Lambda_\ell$, and (iii) a string $\chi_G$ to be recursively defined
based on the decomposition tree $T$ for $G$.

\paragraph{Dictionary}
Our framework uses Lemma~\ref{lemma:lemma4} below to handle the labels
of vertices.  Let each $Y[i]$ with $i\in[m]$ denote the $i$-th bit of
an $m$-bit binary string $Y$.  For each $i\in[m]$, let
$\textit{rank}(Y,i)=Y[1]+\cdots+Y[i]$.  Let $\textit{rank}(Y)$ denote
the number $\textit{rank}(Y,\|Y\|)$ of $1$-bits in $Y$.  Let each
$\textit{select}(Y,j)$ with $j\in[\textit{rank}(Y)]$ denote the index
$i\in[m]$ such that $Y[i]$ is the $j$-th $1$-bit in $Y$.  A
\emph{fully indexable dictionary}~\cite{RamanRS07} for $Y$ is a binary
string from which (1) each $\textit{select}(Y,j)$ with
$j\in[\textit{rank}(Y)]$, (2) each $\textit{rank}(Y,i)$ with
$i\in[m]$, and (3) each $Y[i]$ with $i\in [m]$ can be obtained in
$O(1)$ time.

\begin{lemma}
\label{lemma:lemma4}
It takes $O(m)$ time to compute an $O(r\log m)+o(m)$-bit fully
indexable dictionary $\textit{dict}(Y)$ for an $m$-bit binary string
$Y$ with $\textit{rank}(Y)=r$.
\end{lemma}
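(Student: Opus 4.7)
The plan is to build $\textit{dict}(Y)$ from three layered components: a sparse array that supports $\textit{select}$ directly, a two-level bucketed index for $\textit{rank}$, and a compact store of non-empty sub-block contents on which that index rests.

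First I would store the sorted array $P[1..r]$ of positions of the $1$-bits of $Y$ in $r\lceil\log m\rceil=O(r\log m)$ bits, so that $\textit{select}(Y,j)=P[j]$ is a single array lookup. Second, for $\textit{rank}$, I would partition $[1,m]$ into super-blocks of length $B_1=\log^2 m$ and sub-blocks of length $B_2=\lfloor(\log m)/2\rfloor$ and store (i) the cumulative rank at each super-block boundary in $O(\log m)$ bits, using $(m/B_1)\cdot O(\log m)=o(m)$ bits total, (ii) the rank within the containing super-block at each sub-block boundary in $O(\log B_1)=O(\log\log m)$ bits, using $(m/B_2)\cdot O(\log\log m)=o(m)$ bits total, and (iii) a universal lookup table of size $2^{B_2}\cdot B_2\cdot O(\log B_2)=O(\sqrt{m}\,\log m\,\log\log m)=o(m)$ bits, built in $o(m)$ time, that returns the rank in any $B_2$-bit pattern at any offset. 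Third, to feed the lookup table without paying $m$ bits for the raw sub-blocks, I would concatenate only the non-empty sub-blocks into a single array and mark the non-empty sub-blocks with an $O(m/\log m)$-bit bitmap $B_{\mathrm{sub}}$. A standard Jacobson-style $o(m)$-bit rank overlay on $B_{\mathrm{sub}}$, built in $O(m/\log m)$ time, maps any sub-block index to the offset of its stored pattern in $O(1)$. Since at most $\min(r,\,m/B_2)$ sub-blocks are non-empty, the concatenated store uses at most $\min(r,\,m/B_2)\cdot B_2$ bits: this is $O(r\log m)$ when $r<m/B_2$, and when $r\geq m/B_2$ it is at most $m=O(r\log m)$. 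Every component is assembled by an $O(m)$-time scan of $Y$ together with the $o(m)$-time universal table build, and the total size is $O(r\log m)+o(m)$ bits.

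Given these components, $\textit{rank}(Y,i)$ returns the super-block prefix plus the sub-block prefix inside the super-block plus a table lookup inside the sub-block (short-circuited to $0$ when $B_{\mathrm{sub}}$ flags the sub-block empty); $Y[i]$ is fetched analogously after one $O(1)$ rank query on $B_{\mathrm{sub}}$ to locate the sub-block's pattern, so all three operations run in $O(1)$ time.

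The main obstacle I anticipate is keeping the dictionary self-contained: $Y$ itself is never stored, so every access and rank must flow through $P$ and the sparse sub-block store, and every auxiliary overlay must remain $o(m)$ even in the regime where $r\log m$ is itself sub-linear in $m$. This is what forces the specific choices $B_1=\log^2 m$ and $B_2=(\log m)/2$: the super-block count drops to $m/\log^2 m$ and $B_{\mathrm{sub}}$ has length only $O(m/\log m)$, so none of the $o(m)$-bit overlays can hide a $\Theta(m)$-bit term; the remainder of the argument is bookkeeping about how to route a query through the three layers in $O(1)$ time.
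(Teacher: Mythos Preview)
Your proposal is correct and is essentially the same construction as the paper's: both store the $r$ positions of $1$-bits for $\textit{select}$, both use a two-level bucketing (super-blocks of size $\Theta(\log^2 m)$, sub-blocks of size $\Theta((\log m)/2)$) plus a universal popcount table for $\textit{rank}$, and both store only the non-empty sub-blocks together with an $O(m/\log m)$-bit bitmap equipped with its own $o(m)$-bit rank overlay to recover individual bits. The only differences are cosmetic---your $B_1=\log^2 m$ versus the paper's $h^2\approx\tfrac14\log^2 m$, and your $\min(r,m/B_2)\cdot B_2$ bound on the sub-block store versus the paper's direct $hr$ bound---and both lead to the same $O(r\log m)+o(m)$ total.
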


\begin{proof}
Let $h=\lceil\frac{1}{2}\log m\rceil$.  Assume $r\geq 1$ and that $m$
is an integral multiple of $h^2$ without loss of generality.  Let
$\textit{dict}(Y)$ be the prefixed concatenation of the following
$O(m)$-time obtainable $O(r\log m)+o(m)$-bit strings
$\textit{dict}_1(Y)$, $\textit{dict}_2(Y)$, and $\textit{dict}_3(Y)$.

(1) Select: The select query can be supported in $O(1)$ time by the
string $\textit{dict}_1(Y)$ whose $j$-th $2h$-bit word for each $j\in
[r]$ stores $\textit{select}(Y,j)$. We have
$\|\textit{dict}_1(Y)\|=O(r\log m)$.

(2) Rank: The rank query can be supported in $O(1)$ time by the
prefixed concatenation $\textit{dict}_2(Y)$ of the following
$O(m)$-time obtainable $o(m)$-bit strings
$\chi_{2a},\chi_{2b},\chi_{2c}$:
\begin{itemize}
\item 
For each $i\in [m/h^2]$, let the $i$-th $2h$-bit word $\chi_{2a}(i)$
of $\chi_{2a}$ store $\textit{rank}(Y,ih^2)$. Thus, $\|\chi_{2a}\|=o(m)$.
\item 
For each $i\in[m/h^2]$, let $Y_i$ be the $i$-th $h^2$-bit substring
$Y[(i-1)h^2+1,ih^2]$ of $Y$ and let $\chi_{2b}$ store the ranks for
the positions that are integral multiples of $h$ in all $Y_i$ with
$i\in[m/h^2]$.  Specifically, let $\chi_{2b}$ be the concatenation of
all $\chi_{2b}(i)$ with $i\in[m/h^2]$, where the $j$-th $2\lceil \log
h\rceil$-bit word of $\chi_{2b}(i)$ stores $\textit{rank}(Y_i,jh)$ for
each $j\in[h]$.  Thus, $\|\chi_{2b}\|=O((m/h^2)\cdot h\cdot \log
h)=o(m)$.

\item
By $2^{h}=O(\sqrt{m})$, there is an $O(m)$-time computable $o(m)$-bit
string $\chi_{2c}$ answering $\textit{rank}(Z)$ in $O(1)$ time for any
string $Z$ having at most $h$ bits.
\end{itemize}
For each $k\in[m]$, it takes $O(1)$ time to obtain $\textit{rank}(Y,
k)= r_a+r_b+r_c$ as follows.  With $i=\lceil k/h^2\rceil$, obtain
$r_a=\textit{rank}(Y,(i-1)h^2)$ from $\chi_{2a}$.  With
$j=\lceil(k-(i-1)h^2)/h\rceil$, obtain $r_b=\textit{rank}(Y_i,(j-1)h)$
from $\chi_{2b}$.  Obtain $r_c=\textit{rank}(Y[(i-1)h^2+(j-1)h+1,k])$
from $\chi_{2c}$.

(3) Membership: The membership query can be supported in $O(1)$ time
by the prefixed concatenation $\textit{dict}_3(Y)$ of the following
strings $\chi_{3a}$, $\chi_{3b}$, and $\chi_{3c}$:
\begin{itemize}
\item 
Let $\chi_{3a}$ be the $m/h$-bit string such that $\chi_{3a}[i]=0$
if and only if the $i$-th $h$-bit word of $Y$ is $0$.

\item 
Let $\chi_{3b}=\textit{dict}_2(\chi_{3a})$, which has $o(m)$ bits.

\item 
Let $\chi_{3c}$ be the $hr$-bit string such that if $\chi_{3a}[i]$ is
the $j$-th $1$-bit of $\chi_{3a}$, then the $j$-th $h$-bit word of
$\chi_{3c}$ with $j\in [r]$ stores the $i$-th $h$-bit word of $Y$.
\end{itemize}
For each $k\in[m]$, it takes $O(1)$ time to obtain $Y[k]$ as follows.
Let $i=\lceil k/h\rceil$.  If $\chi_{3a}[i]=0$, then $Y[k]=0$. If
$\chi_{3a}[i]=1$, then obtain $j=\textit{rank}(\chi_{3a},i)$ from
$\chi_{3b}$ and obtain $Z=Y[(i-1)h+1,ih]$ from the $j$-th $h$-bit word
of $\chi_{3c}$.  Thus, $Y[k]=Z[k-(i-1)h]$.
\end{proof}

\paragraph{Labeling}
A \emph{labeling} for a graph $H$ is a bijection $L:V(H)\to [|V(H)|]$,
assigning each vertex $u$ of $H$ a \emph{label} $L(u)$ representable
in $\lceil \log|V(H)|\rceil$ bits.  The query support of our encoding
scheme $A^*$ is based on the following $O(n)$-time obtainable labeling
$L=L_G$.  By $O(1)^{\textit{poly}(\ell)}=o(n)$, it takes $O(n)$ time
to associate an arbitrary fixed labeling $L_H$ to each distinct
subgraph $H$ in $\Lambda_\ell$ and construct an $o(n)$-bit string
$\chi_{\textit{leaf}}$ from which $L_H$ can be obtained in $O(1)$ time
via $\textit{code}(H)$.  For each subtree $T_H$ of $T$ with
$k=|V(H)|>\ell$, let let $(U_0,\ldots,U_p)$ and $(H_0,\ldots,H_p)$ be
the $T$-partition and $T$-subgraphs of $H$.  Let each $L_H(u)$ with
$u\in U_0$ be an arbitrary distinct integer in $[|U_0|]$.  For each
$u\in U_i$ with $i\in[p]$, if $L_{H_i}(u)$ is the $j$-th smallest
number in the set $L_{H_i}(U_i)$ of labels, then let
$L_H(u)=|U_0|+\cdots+|U_{i-1}|+j$.

Let $\chi_{\textit{label}}$ be the prefixed concatenation of
$\chi_{\textit{leaf}}$ and $\chi_G$, where $\chi_H$ for each subtree
$T_H$ of $T$ is recursively defined as follows.  Let $\chi_H$ be an
$O(1)$-bit string fixed for all subgraphs $H\in \Lambda_\ell$,
signifying that $H$ is a leaf node of $T$ and $\textit{code}(H)$ can
be obtained from $X_{\textit{base}}$ in $O(1)$ time using the position
of $T_H$ in $T$.  If $k=|V(H)|>\ell$, then let $\chi_H$ be the
prefixed concatenation of $\chi_{H_0},\ldots,\chi_{H_p}$ for the
$T$-partition $(U_0,\ldots,U_p)$ and the $T$-subgraphs
$(H_0,\ldots,H_p)$ of $H$ such that $\chi_{H_0}$ supports the
following queries in $O(1)$ time:
\begin{enumerate}[label=\emph{L\arabic*:}, ref={L\arabic*}]
\item 
\label{condition:L1}
Given $L_H(u)$, obtain the index $i\in[0,p]$ with $u\in U_i$.

\item 
\label{condition:L2}
Given $i\in[0,p]$ and $L_{H_i}(u)$, obtain $L_H(u)$.
\end{enumerate}

(1) For Query~\ref{condition:L1}, let $Y_0$ be the $k$-bit string
whose $j$-th $1$-bit with $j\in [p]$ is at position
$|U_0|+\cdots+|U_{j-1}|+1$.  The index $i\in[0,p]$ with $u\in U_i$ is
$\textit{rank}(Y_0,L_H(u))$, which is obtainable in $O(1)$ time from
the $O(p\log k)+o(k)$-bit string $\chi_0=\textit{dict}(Y_0)$ by
Lemma~\ref{lemma:lemma4}. By Condition~\ref{condition:S3} of the star
partition $(V_0,\ldots,V_p)$ corresponding to $(U_0,\ldots,U_p)$, we
have $\|\chi_0\|=o(k)$.

(2) For Query~\ref{condition:L2}, we focus on the case with $i\in[p]$,
since $L_H(u)=L_{H_0}(u)$.  For each $i\in [p]$, let
$\chi_{i,a}=\textit{dict}(Y_i)$ for the $|V(H_i)|$-bit string $Y_i$
such that $Y_i[j]=1$ if and only if $U_0$ contains the vertex $u$ with
$L_{H_i}(u)=j$.  Lemma~\ref{lemma:lemma4} implies
$\|\chi_{i,a}\|=O(|N_H(U_i)|\cdot\log k)+o(|V(H_i)|)$.  Let
$\chi_{i,b}$ be the $O(|N_H(U_i)|\cdot \log k)$-bit string whose
$j$-th $\lceil\log k\rceil$-bit word $\chi_{i,b}(j)$ with $j\in
[|N_H(U_i)|]$ stores $L_H(u)$ for the vertex $u$ of $H_i$ such that
$Y_i[L_{H_i}(u)]$ is the $j$-th $1$-bit of $Y_i$ (i.\,e.,
$\textit{select}(Y_i,j)=L_{H_i}(u)$).  To obtain $L_H(u)$ from $i\in
[p]$ and $L_{H_i}(u)$, obtain $c=Y_i[L_{H_i}(u)]$ from $\chi_{i,a}$.
If $c=0$, then obtain $L_H(u)=L_{H_i}(u)+\textit{select}(Y_0,i)-1$
from $\chi_0$.  If $c=1$, then obtain
$L_H(u)=\chi_{i,b}(\textit{rank}(Y_i,L_{H_i}(u)))$ from $\chi_{i,a}$
and $\chi_{i,b}$.  Let $\chi_i$ be the $O(|N_H(U_i)|\cdot\log
k)+o(|V(H_i)|)$-bit prefixed concatenation of $\chi_{i,a}$ and
$\chi_{i,b}$.  The bit count of the $O(k)$-time obtainable prefixed
concatenation $\chi_{H_0}$ of the strings $\chi_0,\ldots,\chi_p$ is
\begin{equation}
\label{equation:equation11}
o(k)+\sum_{i\in[p]} O(|N_H(U_i)|\cdot\log
k)+o(|V(H_i)|)=o(k)+O(|\partial H|\cdot \log k).
\end{equation}
We prove the next equation for each subtree $T_H$ of $T$ by induction
on the bounded height of $T_H$:
\begin{equation}
\label{equation:equation12}
\|\chi_H\|=o(k)+O(1+|\partial H|\cdot \log k).
\end{equation}
If $k\leq \ell$, then $\|\chi_H\|=O(1)$ implies
Equation~\eqref{equation:equation12} (even if $k=O(1)$ and $|\partial
H|=0$).  The basis holds.  If $k>\ell$, then
Equations~\eqref{equation:equation4} and~\eqref{equation:equation11}
and $p=o(k/\log k)$ (by Condition~\ref{condition:S3} of
$(V_0,\ldots,V_p)$) imply
\begin{align*}
\|\chi_H\|
&=o(k)+O(|\partial H|\cdot\log k)+\sum_{i\in[p]}o(|V(H_i)|)+O(1+|\partial H_i|\cdot\log |V(H_i)|)\\
&=o(k)+O(1+|\partial H|\cdot\log k)
\end{align*}
by the inductive hypothesis. Equation~\eqref{equation:equation12} is
proved.  By $|\partial G|=0$, we have $\|\chi_G\|=o(n)$.

\paragraph{The framework and how to use it}
Since each graph $H$ in $\Lambda_\ell$ has at most $\ell=O(\log\log
n)$ vertices, it takes $O(n)$ time to compute an $o(n)$-bit string
$\chi_{q,\textit{leaf}}$ from which the query $q$ can be supported in
$O(1)$ time via the labels $L_H(u)$ and other information of the
vertices $u$ given to the query algorithm for $q$ and the encoded
string $\textit{code}(H)$ for $H$ which is $O(1)$-time obtainable from
$X_{\textit{base}}$ using the position of $T_H$ in $T$.  For each
subtree $T_H$ of $T$, recursively define $\chi_{q,H}$ as follows.  Let
$\chi_{q,H}$ be an $O(1)$-bit string fixed for all subgraphs $H\in
\Lambda_\ell$, signifying that $H$ is a leaf node of $T$ and
$\textit{code}(H)$ can be obtained from $X_{\textit{base}}$ in $O(1)$
time using the position of $T_H$ in $T$.  If $k=|V(H)|>\ell$, then let
$(U_0,\ldots,U_p)$ and $(H_0,\ldots,H_p)$ be the $T$-partition and the
$T$-subgraphs of $H$.  To use the framework, one just has to provide
an $O(k)$-time obtainable string $\chi_{q,H_0}$ with
\[
\|\chi_{q,H_0}\|=o(k)+O(|U_0|\cdot\log k)
\]
such that the query $q$ can be answered in $O(1)$ time from the
prefixed concatenation $\chi_{q,H}$ of $\chi_{q,H_0},\ldots,
\chi_{q,H_p}$ via the labels $L_H(u)$ and other information of the
vertices $u$ given to the query algorithm.  The query $q$ can then be
supported in $O(1)$ time from $X_{\textit{base}}$ and the prefixed
concatenation $X_q$ of $\chi_{\textit{label}}$,
$\chi_{q,\textit{leaf}}$, and $\chi_{q,G}$.  Since the overall number
of vertices in all nodes of $T$ is $O(n)$, the strings $\chi_{q,G}$
and $X_q$ can be computed in $O(n)$ time.  We prove the following
equation for each subtree $T_H$ of $T$ by induction on the bounded
height of $T_H$:
\begin{equation}
\label{equation:equation13}
\|\chi_{q,H}\|=o(k)+O(1+|\partial H|\cdot \log k).
\end{equation}
If $k\leq \ell$, then $\|\chi_{q,H}\|=O(1)$ implies
Equation~\eqref{equation:equation13} (even if $k=O(1)$ and $|\partial
H|=0$).  The basis holds.  If $k>\ell$, then
Equation~\eqref{equation:equation4} and Condition~\ref{condition:S3}
of $(V_0,\ldots,V_p)$ imply
\begin{align*}
\|\chi_{q,H}\|
&=o(k)+O(|U_0|\cdot\log k)+
\sum_{i\in[p]}o(|V(H_i)|)+O(1+|\partial H_i|\cdot\log |V(H_i)|)\\
&=o(k)+O(1+|\partial H|\cdot\log k)
\end{align*}
by the inductive hypothesis. Equation~\eqref{equation:equation13} is
proved.  By $|\partial G|=0$, we have $\|\chi_{q,G}\|=o(n)$.

\subsection{Degree and other information of a vertex}
\label{subsection:subsection3.2}
For the query $q$ of obtaining the degree $|N_G(u)|$ of a vertex $u$
in $G$ from $L(u)$, let $\chi_{q,H_0}$ be the $O(k)$-time obtainable
$O(|U_0|\cdot\log k)$-bit string whose $L_H(u)$-th $\lceil\log
k\rceil$-bit word $\chi_{q,H_0}(L_H(u))$ stores the degree $|N_H(u)|$
of the vertex $u\in U_0$.  As instructed by the framework, it suffices
to show as follows that it takes $O(1)$ time to obtain $|N_G(u)|$ from
the prefixed concatenation $\chi_{q,H}$ of
$\chi_{q,H_0},\ldots,\chi_{q,H_p}$ via $L_H(u)$ for the case with
$k=V(H)>\ell$: Obtain the index $i\in [0,p]$ with $u\in U_i$ and
$L_{H_i}(u)$ from $\chi_{\textit{label}}$ via
Query~\ref{condition:L1}.  If $i=0$, then return
$|N_H(u)|=\chi_{q,H_0}(L_H(u))$ in $O(1)$ time.  If $i\in[p]$, then
return $|N_H(u)|=|N_{H_i}(u)|$ which can be obtained from
$\chi_{q,H_i}$ and $L_{H_i}(u)$ in $O(1)$ time. Note that this query
$q$ does not need Query~\ref{condition:L2}.

Other queries $q$ for a vertex $u$ of $G$ like (i) the equipped color
of $u$ in $G$, (ii) the number of incident outgoing edges of $u$ in
$G$, or (iii) the label $L(v)$ of an arbitrary neighbor $v$ of $u$ in
$G$ can be supported in a same way as long as (1) the answer can be
represented in $O(\log n)$ bits, (2) the answer to the query $q$ for a
vertex $u\in U_i$ with $i\in[0,p]$ can be obtained from merely
$\chi_{q,H_i}$ using $L_{H_i}(u)$, and (3) the answers for the
subgraphs $H$ of $G$ in $\Lambda_\ell$ can be stored in an $o(n)$-bit
string $\chi_{q,\textit{leaf}}$ with the help of $X_{\textit{base}}$
to provide $\textit{code}(H)$.

\subsection{Adjacency and other information between a pair of adjacent vertices}
\label{subsection:subsection3.3}
We first claim an $O(n)$-time obtainable $O(1)$-orientation $D$ for
$G$ that can be represented using only $o(n)$ bits into addition to
$X_\textit{base}$.  The query of determining the directions of the
edges between two vertices in $G$ via their labels can then be
supported in $O(1)$ time by the query $q$ for obtaining the labels
$L(v)$ and the directions of the edges of $G[\{u,v\}]$ for the $O(1)$
outgoing incident edges $\vec{uv}$ of $u$ in $D$ from $L(u)$: To
determine whether $\vec{uv}$ is an edge of $G$, just run the query
algorithm of $q$ on $u$ and $v$ to report the $O(1)$ outgoing edges of
$u$ and $v$ in $D$ and the directions of these $O(1)$ edges in $G$.
Since $D$ is an $O(1)$-orientation for $G$, $\vec{uv}$ is an edge of
$G$ if and only if it is detected by running the query algorithm for
$q$ on $u$ and $v$ in $O(1)$ time.

This query $q$ is about $O(\log n)$-bit representable information of a
vertex, but the approach of the previous subsection does not directly
work.  Since the answer involves labels of the neighbors of the
queried vertex, Query~\ref{condition:L2} has to come in: Let
$\chi_{q,H_0}$ be an $O(|U_0|\cdot\log k)$-bit string whose $j$-th
$O(\log k)$-bit word $\chi_{q,H_0}(j)$ with $j\in [|U_0|]$ stores the
label $L_H(v)$ and directions of the edges of $H[\{u,v\}]$ for each of
the $O(1)$ $u$-out edges $\vec{uv}$ in $D$ for the vertex $u\in U_0$
with $L_H(u)=j$.  To obtain the answer for a vertex $u\in V(H)$ from
$\chi_{q,H}$ via $L_H(u)$ for the case with $k=V(H)>\ell$, we also
obtain the index $i\in [0,p]$ with $u\in U_i$ and $L_{H_i}(u)$ from
$\chi_{\textit{label}}$ via Query~\ref{condition:L1}.  If $i=0$, then
return the answer stored in $\chi_{q,H_0}(L_H(u))$.  If $i\in [p]$,
then $|N_H(u)|=\chi_{q,H_0}(L_H(u))$ in $O(1)$ time.  If $i\in[p]$,
then we obtain the answer in $H_i$ from $\chi_{q,H_i}$ via
$L_{H_i}(u)$.  To return the answer of $u$ in $H$, we need
Query~\ref{condition:L2} to obtain the label $L_H(v)$ from
$\chi_{\textit{label}}$ via $L_{H_i}(v)$ and $i$ for each of the
$O(1)$ $u$-out edges $\vec{uv}$ in $D$.

It remains to prove the claim by (i) presenting an $O(k)$-time
obtainable $O(1)$-orientation $D_H$ of the $k$-vertex graph $H$ for
each subtree $T_H$ of $T$ and (ii) showing that $D=D_G$ can be
represented in $o(n)$ bits in addition to $X_{\textit{base}}$ to
support the $O(1)$-time query $d$ of reporting the $O(1)$ labels
$L(v)$ for the $u$-out edges $\vec{uv}$ in $D$ from $L(u)$.  By
$\ell=O(\log\log n)$, it takes $O(n)$ time to obtain an $o(n)$-bit
string $\chi_{d,\textit{leaf}}$ assigning an $O(1)$-orientation $D_H$
for each distinct graph $H\in \Lambda_\ell$ such that the query $q$
for each vertex $u$ of $H$ can be answered in $O(1)$ time from
$\chi_{d,\textit{leaf}}$ via $L_H(u)$ and $\textit{code}(H)$, which
can be obtained in $O(1)$ time from $X_{\textit{base}}$ using the
position of $T_H$ in $T$.  If $k>\ell$, then let $(U_0,\ldots,U_p)$
and $(H_0,\ldots,H_p)$ be the $T$-partition and the $T$-subgraphs of
$H$. For each $i\in[p]$, let $D_i$ consist of the outgoing incident
edges of $U_i$ in $D_{H_i}$.  Let $W_0$ consist of the vertices of
$U_0$ and the vertices $u$ that can be reached by exactly one outgoing
incident edges of $U_0$ in $D_{H_1}\cup\cdots\cup D_{H_p}$.  Let $D_0$
consist of the outgoing incident edges of $W_0$ in an arbitrary
$O(k)$-time obtainable $O(1)$-orientation $D'_H$ for $H$.  The
$O(k)$-time obtainable graph
\[
D_H=D_0\cup D_1\cup \cdots\cup D_p
\]
is an $O(1)$-orientation for $H$:\footnote{The following proof is a
specialized version (i.\,e., for the case with $t=1$) of the proof of
Lemma~\ref{lemma:lemma6} in~\S\ref{subsubsection:subsubsection3.4.2}.}
The number of $u$-out edges in $D_H$ of each vertex $u\in U_0$ is that
in $D_0\subseteq D'_H$ and hence is $O(1)$.  The number of $u$-out
edges in $D_H$ of each vertex $u\in U_i$ with $i\in [p]$ is exactly
that in $D_0\cup D_i\subseteq D'_H\cup D_{H_i}$ and hence is $O(1)$.
Assume for contradiction $E(D_H[\{u,v\}])=\varnothing$ for adjacent
vertices $u\in U_i$ and $v\in U_j$ in $H$ with $\{i,j\}\subseteq
[0,p]$.  We have $i\ne j$ or else $u$ and $v$ are adjacent in
$D_i=D_j$.  We have $ij=0$ or else $u$ and $v$ are non-adjacent in
$H$.  Let $i=0$ and $j\in [p]$ without loss of generality. We have
$\vec{uv}\notin E(D'_H)$ or else $\vec{uv}\in E(D_0)\subseteq E(D_H)$
by $u\in U_0\subseteq W_0$.  Since $D'_H$ is an $O(1)$-orientation for
$H$, we have $\vec{vu}\in E(D'_H)$.  We also have $\vec{uv}\notin
E(D_{H_j})$ or else $v\in W_0$ implies $\vec{vu}\in E(D_0)\subseteq
E(D_H)$, contradicting $E(D_H[\{u,v\}])=\varnothing$.  Since $D_{H_j}$
is an $O(1)$-orientation for $H_j$, we have $\vec{vu}\in E(D_{H_j})$,
implying $\vec{vu}\in E(D_j)\subseteq E(D_H)$, contradicting
$E(D_H[\{u,v\}])=\varnothing$.  Thus, $D_H$ is indeed an
$O(1)$-orientation for $H$.

To see that $D=D_G$ can be represented in $o(n)$ bits in addition to
$X_{\textit{base}}$ to support the query $d$, let $\chi_{q,H_0}$ be
the $o(k)+O(|U_0|\cdot\log k)$-bit prefixed concatenation of the
following strings $\chi_1$ and $\chi_2$.  Observe first that each
$D_{H_i}$ with $i\in[p]$ is an $O(1)$-orientation for $H_i$, implying
\[
|W_0|=|U_0|+\sum_{i\in[p]}O(|N_H(U_i)|)=O(|U_0|)+o(k/\log k)
\]
by Condition~\ref{condition:S3} of the corresponding star partition 
$(V_0,\ldots,V_p)$ and Equation~\eqref{equation:equation4}.
\begin{itemize}
\item 
Let $\chi_1=\textit{dict}(Y)$ for the $k$-bit string $Y$ such that
$Y[L(u)]=1$ if and only if $u\in W_0$.  By Lemma~\ref{lemma:lemma4},
$\|\chi_1\|=o(k)+O(|W_0|\cdot\log k)=o(k)+O(|U_0|\cdot\log k)$.

\item 
Let $\chi_2$ be a string whose $j$-th $O(\log k)$-bit word $\chi_2(j)$
for each $j\in [|W_0|]$ stores $L_H(v)$ and $H[\{u,v\}]$ of each
$u$-out edge $\vec{uv}$ for the vertex $u$ with
$\textit{rank}(Y,L_H(u))=j$ in $D_0$.  We have
$\|\chi_2\|=O(|W_0|\cdot \log k)=o(k)+O(|U_0|\cdot\log k)$.
\end{itemize}
The query $d$ for a vertex $u\in V(H)$ can be supported in $O(1)$ time
from $\chi_{d,H}$ via $L_H(u)$ as follows: Obtain the index $i\in
[0,p]$ with $u\in U_i$ and $L_{H_i}(u)$.  If $Y[L_H(u)]=1$, then
report the information stored in $\chi_2(\textit{rank}(Y,L_H(u)))$ in
$O(1)$ time first.  If $i\in[p]$, then obtain the labels $L_{H_i}(v)$
using the query $d$ in $H_i$ via $L_{H_i}(u)$ in $O(1)$ time and
return the labels $L_H(v)$ using Query~\ref{condition:L2}.  According
to the framework of~\S\ref{subsection:subsection3.1}, the query $d$ is
supported in $O(1)$ time using an $o(n)$-bit $X_d$ and
$X_{\textit{base}}$.

Other queries $q$ for a pair of adjacent vertices $u$ and $v$ of $G$
can be supported in a same way as long as (1) the answer can be
represented in $O(\log n)$ bits and (2) the answers for the subgraphs
of $G$ in $\Lambda_\ell$ can be stored in overall $o(n)$ bits in
addition to $X_{\textit{base}}$.  For example, our encoding scheme
$A^*$ supports in $O(1)$ time the query $q$ of reporting the the
neighbor of $u$ succeeding (respectively, preceding) $v$ in clockwise
order around $u$ according to the embedding of $G$ reflected by the
adjacency list of $G$ decoded by~$A^*$.  Specifically, let
$\chi_{q,H_0}$ additionally store for each $u$-out edge $\vec{uv}$ of
$D$ the $O(\log |V(H)|)$-bit information of the $O(1)$ triples
$(u,v,w)$ or $(v,u,w)$ stored in $\textit{code}(H_0)$.  Since the bit
count of $\chi_{q,H_0}$ remains $o(k)+O(|U_0|\cdot\log k)$, the query
$q$ can be supported in $O(1)$ time via $X_{\textit{base}}$ and an
$o(n)$-bit string $X_q$.  Combing with the $O(1)$-time query of
reporting an arbitrary neighbor $v$ of $u$, the query of listing all
neighbors of a vertex $u$ can be supported in $O(1)$ time per output.

\subsection{Bounded-distance shortest path}
\label{subsection:subsection3.4}

\begin{figure}
\centerline{\scalebox{0.5}{\includegraphics{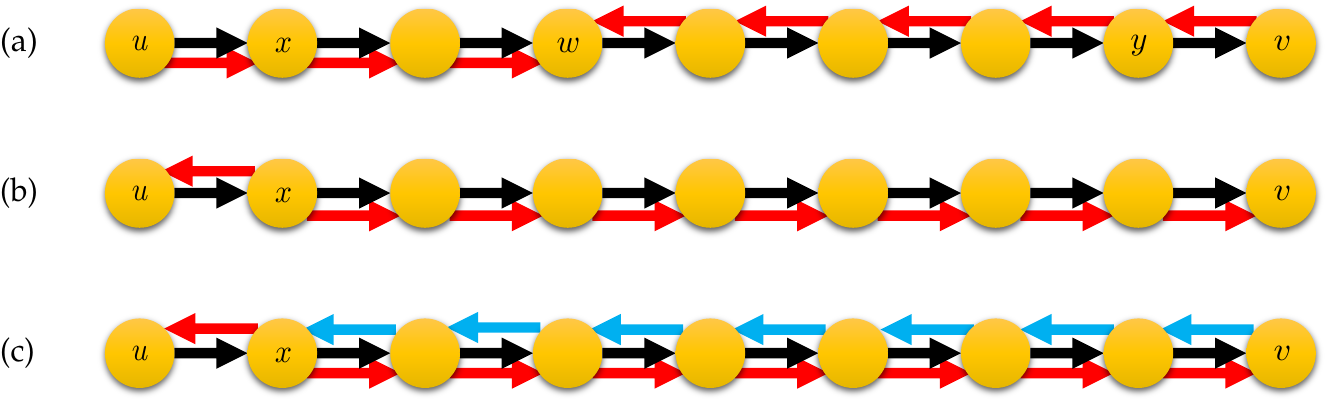}}}
\caption{Let $d_G(u,v)=8$.  The red edges belong to a $10$-director
  $D$ for $G$ with black edges.  The blue edges belong to an
  $8$-enhancer $E_8$ for $D$.  Let $D_8=D\cup E_8$.  (a) A
  $D$-directive $uv$-path of $G$.  (b) A $D$-pivoted $uv$-path of
  $G$. (c) A shortest $uv$-path $P$ of $G$ with $P-u\subseteq D_8\cap
  D_8^r$.}
\label{figure:figure4}
\end{figure}

Let $d_G(u,v)$ for vertices $u$ and $v$ of a graph $G$ denote the
length of a shortest $uv$-path of $G$.  A $uv$-path $P$ of $G$ is
\emph{$D$-directive} for a graph $D$ if $P$ contains a vertex $w$ with
\[
d_{P\cap D}(u,w)+d_{P\cap D^r}(w,v)=d_G(u,v).
\]
A $D$-directive $uv$-path of $G$ is a shortest $uv$-path of $G$ but
not vice versa.  A vertex pair $(u,v)$ of $G$ is \emph{$D$-directive}
if $G$ contains a $D$-directive $uv$-path.  For a positive integer
$t$, a \emph{$t$-director} for $G$ is an $O(1)$-orientation $D$ for
$G$ such that each vertex pair $(u,v)$ of $G$ with $d_G(u,v)\in [t]$
is $D$-directive.  Thus, each pair $(u,v)$ with $d_G(u,v)\in [t]$
admits an edge $\vec{ux}$ of $G\cap D$ with $d_G(x,v)=d_G(u,v)-1$ or
an edge $\vec{yv}$ of $G\cap D^r$ with $d_G(u,y)=d_G(u,v)-1$.  See
Figure~\ref{figure:figure4}(a) for an illustration.  An
$O(1)$-orientation for $G$ is a $1$-director for~$G$ and vice versa.
We prove the following lemma
in~\S\ref{subsubsection:subsubsection3.4.1}.

\begin{lemma}
\label{lemma:lemma5}
For any prespecified positive integer $t=O(1)$, it takes $O(n)$ time
to obtain a $t$-director for an $n$-vertex graph $G$ with
$\eta(G)=O(1)$.
\end{lemma}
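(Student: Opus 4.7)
The plan is to construct the $t$-director $D$ recursively along the star-partition decomposition tree $T$ produced by Lemma~\ref{lemma:lemma1}, extending the recursive $O(1)$-orientation construction outlined in Section~\ref{subsection:subsection3.3} (which handles the special case $t=1$). First I would build $T$ in $O(n)$ time, precompute $t$-directors for the tiny leaf subgraphs by brute force, and then combine orientations bottom-up along $T$.

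For each leaf subgraph $H\in \Lambda_\ell$ with $|V(H)|=O(\log\log n)$, I would precompute a $t$-director $D_H$ by exhaustive search over $O(1)$-orientations of $H$, storing the results in an $o(n)$-bit code-book analogously to Phase~2 of the encoding algorithm. For each non-leaf subtree $T_H$ with $T$-partition $(U_0,\ldots,U_p)$ and $T$-subgraphs $(H_0,\ldots,H_p)$, given recursively obtained $t$-directors $D_{H_i}$ for each $i\in [p]$, I would take $D_H$ to inherit the orientation of each $D_{H_i}$ in the \emph{deep interior} of $H_i$, and to orient all remaining edges incident to an enlarged boundary set $W_0\supseteq U_0$ using a carefully chosen $O(1)$-orientation $D'_H$ of $H$. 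Here $W_0$ is designed to contain every vertex that could serve as a pivot of a shortest path of length at most $t$ between a vertex of $U_0$ and a vertex lying in the interior of some $H_i$; by Conditions~\ref{condition:S2} and~\ref{condition:S3} the enlargement factor is bounded independently of $n$, so $|W_0|=O(|U_0|)$ and $D_H$ remains an $O(1)$-orientation. The bounded degeneracy of graphs with $\eta(H)=O(1)$ guarantees that $D'_H$ is computable in $O(|V(H)|)$ time.

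The correctness claim is that $D_H$ is a $t$-director of $H$. For any pair $(u,v)$ of $H$ with $d_H(u,v)\le t$, I would argue by cases on whether some shortest $uv$-path $P$ crosses the boundary. If $P$ stays inside a single $H_i$'s deep interior, the directive property transfers from $D_{H_i}$ by induction. Otherwise Condition~\ref{condition:S1} (pairwise non-adjacency of $V_1,\ldots,V_p$) forces $P$ to pass through $U_0\subseteq W_0$, and I would choose a pivot $w\in W_0$ on $P$ so that both the $u$-to-$w$ and $v$-to-$w$ portions of $P$ use edges oriented by $D'_H$. The total running time is $O(n)$ because $T$ has $O(1)$ height by Condition~\ref{condition:S2} and each level processes $n+o(n)$ vertices in aggregate.

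I expect the main obstacle to be ensuring the existence of an $O(1)$-orientation $D'_H$ that respects the directive pattern along every possible cross-boundary short path. A naive degeneracy-based orientation can fail already on $K_{1,t}$ whose high-degree center receives every incident edge oriented outward, so $D'_H$ must orient edges with pivots in mind. I would likely address this by anchoring $D'_H$ at a BFS-layered structure rooted near $U_0$ so that from each vertex within $t$ of $U_0$ there is a short directed path back toward $U_0$, relying on the bounded Hadwiger number of $H$ to keep the resulting out-degree $O(1)$.
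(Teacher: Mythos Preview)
Your proposal has a genuine circularity. The recursive combination you describe is essentially the construction preceding Lemma~\ref{lemma:lemma6} in the paper, and that construction \emph{presupposes} Lemma~\ref{lemma:lemma5}: at every non-leaf node it needs ``an arbitrary $O(k)$-time obtainable $t$-director $D'_H$ for $H$'' to orient the edges incident to $W_0$. The proof of Lemma~\ref{lemma:lemma6} begins by taking a $D'_H$-directive $uv$-path $P$ of $H$; if $D'_H$ is merely an $O(1)$-orientation (a $1$-director) rather than a $t$-director, no such $P$ need exist when $d_H(u,v)>1$, and the case analysis collapses. So the recursive scheme does not manufacture a $t$-director from scratch; it only shows that \emph{given} the ability to compute $t$-directors in linear time, one can build a particular $t$-director that is succinctly encodable.

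Your last paragraph correctly identifies this as the obstacle, but the BFS-layering idea does not close the gap. A BFS from $U_0$ controls directed reachability toward or away from $U_0$, yet the directive property must hold for \emph{every} pair $(u,v)$ with $d_H(u,v)\le t$, including pairs both lying in $W_0$ whose shortest paths need not pass through $U_0$ at all. The paper's actual proof of Lemma~\ref{lemma:lemma5} avoids the decomposition tree entirely: it inducts on $t$ via Lemma~\ref{lemma:lemma8}, starting from a degeneracy-based $1$-director and, at each step, augmenting a $t$-director to a $(t{+}1)$-director by constructing \emph{$j$-enhancers}. The key technical step (Claim~2 there) builds an auxiliary graph $H$ on pairs at distance $j{+}1$ admitting certain $D$-pivoted paths, shows each colour class of a suitable $O(1)$-colouring makes $H$ a minor of $G$, and hence obtains an $O(1)$-orientation of $H$ that supplies the missing enhancer edges. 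That argument is what you are missing; the star-partition recursion is used in the paper only \emph{after} Lemma~\ref{lemma:lemma5} is in hand.
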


As mentioned in the technical overview of~\S\ref{section:section1},
Lemma~\ref{lemma:lemma5} and our $\cF$-optimal encoding scheme $A^*$
immediately lead to an $O(n)$-time obtainable $O(n)$-bit encoded
string for any graph $G$ with $\eta(G)=O(1)$ equipped with a
$t$-director $D$, supporting the query $q$ of bounded-distance
shortest path in $O(1)$ time, already improving upon the $O(n\log
n)$-bit data structure of Kowalik and Kurowski~\cite{KowalikK06}: To
obtain a shortest $uv$-path of $G$ for vertices $u$ and $v$ with
$d_G(u,v)\leq t$, just find a shortest $uv$-path in the $O(1)$-vertex
subgraph of $G$ induced by
\[
W=\{w\in V(G): \min\{d_D(u,w),d_D(v,w)\}\leq t\},
\] 
which can be obtained from accessing the $O(1)$-time query $d$ of
obtaining the $O(1)$ $u$-out edges $\vec{uv}$ in $D$ and their
$G[\{u,v\}]$ for $O(1)$ times due to $t=O(1)$.
 
An $\cF$-succinct encoded string for $G$ equipped with an arbitrary
$t$-director for $G$ need not be $\cF$-succinct for $G$. To support
the query $q$ of bounded-distance shortest path in $O(1)$ time without
affecting the $\cF$-succinctness of the encoded string for $G$, we
present an $O(n)$-time obtainable $t$-director $D=D_G$ for $G$ which
can be represented using an $o(n)$-bit string $X_d$ in addition to
$X_{\textit{base}}$ supporting the above query $d$ in $O(1)$ time.
Specifically, we define for each subtree $T_H$ of $T$ a $t$-director
$D_H$ for $H$ by the following recursive procedure: We apply
Lemma~\ref{lemma:lemma5} to obtain a $t$-director $D_H$ for each
distinct subgraph $H$ of $G$ in $\Lambda_\ell$.  By $\ell=O(\log\log
n)$, it takes $O(n)$ time to obtain an $o(n)$-bit string
$\chi_{d,\textit{leaf}}$ such that the answer to the query $d$ for a
vertex $u\in V(H)$ can be obtained in $O(1)$ time from
$\chi_{d,\textit{leaf}}$ via $L_H(u)$ and $\textit{code}(H)$, which is
$O(1)$-time obtainable from $X_{\textit{base}}$ using the position of
the subtree $T_H$ in $T$.  For a subtree $T_H$ of $T$ with
$|V(H)|>\ell$, let $(U_0,\ldots,U_p)$ and $(H_0,\ldots,H_p)$ be the
$T$-partition and $T$-subgraphs of $H$.  Let each $D_i$ with $i\in[p]$
consist of the $u$-out edges for all vertices $u\in U_i$ in the
recursively defined $t$-director $D_{H_i}$ for $H_i$.  Let each $W_i$
with $i\in[p]$ consist of the vertices $u\in U_i$ with
$d_{D_{H_i}}(U_0,u)\leq t$. That is, $u\in W_i$ if and only if there
is an $xu$-path in $D_{H_i}$ of length at most $t$ for a vertex $x\in
U_0$.  Let $D_0$ consist of the $u$-out edges for all vertices $u$ in
\[
W_0=U_0\cup W_1\cup \cdots \cup W_p
\]
in an arbitrary $O(k)$-time obtainable $t$-director $D'_H$ for $H$. We
prove the following lemma in~\S\ref{subsubsection:subsubsection3.4.2}.
\begin{lemma}
\label{lemma:lemma6}
$D_H=D_0\cup D_1\cup \cdots\cup D_p$ is a $t$-director for $H$.
\end{lemma}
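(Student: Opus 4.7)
The plan is to verify two properties of $D_H=D_0\cup D_1\cup\cdots\cup D_p$: that it is an $O(1)$-orientation for $H$, and that every vertex pair $(u,v)$ of $H$ with $d_H(u,v)\in[t]$ admits a $D_H$-directive shortest $uv$-path. The first extends the $t=1$ argument in the footnote of \S\ref{subsection:subsection3.3}; the second combines the local directiveness of each $D_{H_i}$ with the threshold definition of $W_i$.

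For the orientation property, the plan is first to bound the out-degree: the out-edges of a vertex $u\in U_i$ in $D_H$ lie in $D_i\subseteq D_{H_i}$ and, when $u\in W_0$, also in $D_0\subseteq D'_H$, and both $D_{H_i}$ and $D'_H$ are $O(1)$-orientations by construction. For edge coverage, the plan is to case on the endpoints $u\in U_i,v\in U_j$ of an edge of $H$. When $i=j$, the edge lies in $H_i$ and the tail in $U_i$ places $D_{H_i}$'s orientation into $D_i\subseteq D_H$. When $i\ne j$ are both nonzero, the case is vacuous since $N_H(U_i)\subseteq U_0$. When $i=0$ and $j\in[p]$ (WLOG), the plan mimics the contradiction in the footnote of \S\ref{subsection:subsection3.3}, with the extension that whenever $\vec{uv}\in D_{H_j}$, the one-step reachability $d_{D_{H_j}}(U_0,v)\le 1\le t$ forces $v\in W_j\subseteq W_0$, so the $D'_H$-orientation $\vec{vu}$ is absorbed into $D_0\subseteq D_H$.

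For the directive property, given $(u,v)$ with $d=d_H(u,v)\in[t]$, the plan is to fix a $D'_H$-directive shortest $uv$-path $P=y_0,\ldots,y_d$ with meeting point $y_k$, so $\vec{y_r y_{r+1}}\in D'_H$ for $r<k$ and $\vec{y_{r+1}y_r}\in D'_H$ for $r\ge k$. The goal is to show that $P$, or a path obtained by locally rerouting some of its sub-segments, is $D_H$-directive. At each vertex $y_r$ serving as a tail in the directive orientation on $P$, the plan splits into two cases. If $y_r\in W_0$, then all $D'_H$-out-edges of $y_r$ lie in $D_0\subseteq D_H$, so the required orientation at $y_r$ is inherited directly from $D'_H$. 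If $y_r\in U_\alpha\setminus W_\alpha$, then the definition $W_\alpha=\{u\in U_\alpha:d_{D_{H_\alpha}}(U_0,u)\le t\}$ together with $d\le t$ confines the maximal $U_\alpha$-containing subsegment of $P$ through $y_r$ to lie inside $V(H_\alpha)$, allowing replacement by a $D_{H_\alpha}$-directive shortest subpath whose $U_\alpha$-tailed out-edges lie in $D_\alpha\subseteq D_H$. The star-partition property $N_H(U_i)\subseteq U_0$ ensures that every transition of $P$ between distinct regions $U_\alpha,U_\beta$ with $\alpha\ne\beta$ in $[p]$ passes through $W_0$, which is exactly where $D_0$ inherits the $D'_H$-orientations needed to stitch rerouted subpaths together.

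The main obstacle will be verifying that the local reroutings glue into a single shortest $uv$-path with a coherent meeting point $w$ and total length $d$. Because $t=O(1)$, the path $P$ alternates between $W_0$ and the various $U_\alpha\setminus W_\alpha$-segments only $O(1)$ times, so only boundedly many local replacements are needed. The key consistency to check is that each replacement subpath shares its endpoints with the adjacent $W_0$-segments of $P$ and that the meeting point can be chosen on the segment containing $y_k$; the threshold definition of $W_\alpha$, combined with the bounded height of $T$ and the recursive assumption that each $D_{H_i}$ is already a $t$-director for $H_i$, is precisely what makes this reassembly produce the desired $D_H$-directive shortest $uv$-path.
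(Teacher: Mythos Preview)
Your orientation argument is essentially correct and matches the paper's. The gap is in the directive part.

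The rerouting step is where the argument breaks. When you replace a segment of $P$ lying in $H_\alpha$ by a $D_{H_\alpha}$-directive shortest subpath $R$, you only control the $U_\alpha$-tailed edges of $R$: those lie in $D_\alpha\subseteq D_H$. But $R$ may pass through vertices of $N_H(U_\alpha)\subseteq U_0$, and an edge of $R$ whose tail is such a $U_0$-vertex is oriented by $D_{H_\alpha}$, whereas $D_H$ takes its $W_0$-tailed edges from $D'_H$ via $D_0$. There is no reason for $D_{H_\alpha}$ and $D'_H$ to agree on such an edge, so it need not lie in $D_H$ in the required direction. A second, related issue is the global meeting point: each rerouted subpath is $D_{H_\alpha}$-directive with its own meeting point, and you give no argument for why these align into a single $w$. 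In particular, a segment of $P$ lying entirely on the forward side of $y_k$ may, after replacement, acquire backward edges, destroying directiveness of the assembled path. Your final paragraph names these obstacles but does not overcome them; invoking ``bounded height of $T$'' is not pertinent since the lemma concerns a single level of $T$. Also, the confinement claim for $y_r\in U_\alpha\setminus W_\alpha$ does not use $y_r\notin W_\alpha$ in any way: that a maximal $H_\alpha$-segment of $P$ lies in $V(H_\alpha)$ is tautological, and $y_r\notin W_\alpha$ is a statement about reachability in $D_{H_\alpha}$, not about $P$ (which lives in $D'_H$).

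The paper proceeds differently and avoids rerouting altogether. It takes a non-$D_H$-directive pair $(u,v)$ with $u\in U_i$, $v\in U_j$ minimizing $d_H(u,v)$; minimality yields Conditions~D1 and~D2 (no edge $\vec{uw}\in H\cap D_H$ with $d_H(w,v)=d_H(u,v)-1$, and symmetrically at $v$). For $i\in[p]$ it looks at the maximal prefix $Q\subseteq H_i$ of a $D'_H$-directive path $P$, ending at some $z$, and a $D_{H_i}$-directive $uz$-path $R$. Condition~D1 forces $R^r\subseteq D_{H_i}$ (any forward first edge of $R$ would be a $u$-out edge in $D_i\subseteq D_H$), and once $z\in U_0$ is established this gives $d_{D_{H_i}}(U_0,u)\leq t$, i.e.\ $u\in W_i\subseteq W_0$. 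Then $D'_H$-out-edges of $u$ are in $D_0\subseteq D_H$, so the first edge $\vec{ux}$ of $P$ cannot lie in $D'_H$ (else D1 again), forcing $P^r\subseteq D'_H$ and $j\ne 0$. The symmetric argument at $v$ then forces $\vec{vy}\notin D'_H$, contradicting that $P$ is $D'_H$-directive. The crucial idea you are missing is this use of minimality to force the local $D_{H_i}$-directive path to be \emph{entirely backward}, which certifies $u\in W_0$ directly and sidesteps the gluing problem.
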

Since the overall number of vertices in all nodes of $T$ is $O(n)$,
the $t$-director $D=D_G$ for $G$ can be obtained in $O(n)$ time.
Since $t=O(1)$ and each $D_{H_i}$ with $i\in[p]$ is an
$O(1)$-orientation for $H_i$, we have
\[
|W_0| =|U_0|+\sum_{i\in[p]}O(|N_H(U_i)|)=O(|U_0|)+o(k/\log k)
\]
by Condition~\ref{condition:S3} of the corresponding star partition
$(V_0,\ldots,V_p)$ of $H$ and Equation~\eqref{equation:equation4}.
Thus, the query $d$ can be supported by an $o(n)$-bit string $X_d$ and
$X_{\textit{base}}$ in $O(1)$ time in precisely the same way as the
query $d$ for the $O(1)$-orientation $D$ for $G$
in~\S\ref{subsection:subsection3.3}.  The rest of the subsection
proves Lemmas~\ref{lemma:lemma5} and~\ref{lemma:lemma6}
in~\S\ref{subsubsection:subsubsection3.4.1}
and~\S\ref{subsubsection:subsubsection3.4.2}, respectively.

\subsubsection{Proving Lemma~\ref{lemma:lemma5}}
\label{subsubsection:subsubsection3.4.1}

\begin{lemma}[see, e.g.,~\cite{Kostochka84,Mader67,Thomason01}]
\label{lemma:lemma7}
If $\eta(G)=O(1)$, then $\min\{|N_G(v)|:v\in V(G)\}=O(1)$.
\end{lemma}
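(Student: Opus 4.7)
The plan is to derive this directly from the Mader--Kostochka--Thomason extremal bound that the excerpt already invokes when introducing the Hadwiger number: namely, $|E(G)| = O(|V(G)| \cdot \eta(G) \sqrt{\log \eta(G)})$. First I would reconcile the definitions: by the paper's definition, $\eta(G)$ is the largest $k$ such that $K_k$ is a minor of the underlying undirected graph $G \cup G^r$, and $N_G(v)$ consists of the vertices joined to $v$ by an edge in either direction, so $|N_G(v)|$ equals the degree of $v$ in the undirected graph $G \cup G^r$.

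Next, I would apply the extremal bound to $G \cup G^r$, whose Hadwiger number is exactly $\eta(G)$. This yields $|E(G \cup G^r)| = O(|V(G)| \cdot \eta(G) \sqrt{\log \eta(G)})$. Under the hypothesis $\eta(G) = O(1)$, the right-hand side simplifies to $O(|V(G)|)$, matching the remark earlier in the excerpt that ``$\eta(G) = O(1)$ implies $|E(G)| = O(|V(G)|)$.''

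Finally, I would conclude by a standard averaging argument: the average degree of $G \cup G^r$ is $2|E(G \cup G^r)|/|V(G)| = O(1)$, and the minimum degree is no larger than the average, so
\[
\min_{v \in V(G)} |N_G(v)| \;=\; \min_{v \in V(G)} \deg_{G \cup G^r}(v) \;=\; O(1),
\]
establishing the lemma.

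There is essentially no obstacle here; the lemma is a one-line corollary of the cited Mader--Kostochka--Thomason density bound once the identification $|N_G(v)| = \deg_{G \cup G^r}(v)$ is made. The only thing to be careful about is making sure the extremal bound is applied to the undirected version $G \cup G^r$ rather than to $G$ as a directed graph, so that the definition of $\eta(G)$ used in the paper lines up with the definition under which the Mader--Kostochka--Thomason theorem is stated.
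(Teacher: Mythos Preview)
Your proposal is correct and matches the paper's approach: the paper does not supply a separate proof for this lemma but simply cites the Mader--Kostochka--Thomason references, having already spelled out in the introduction precisely the argument you give (the edge bound $|E(G)|=O(|V(G)|\cdot\eta(G)\sqrt{\log\eta(G)})$ yields $|E(G)|=O(|V(G)|)$ when $\eta(G)=O(1)$, whence the minimum degree is $O(1)$). Your added care in applying the bound to $G\cup G^r$ so that the directed definitions line up is appropriate and does not change the substance.
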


\begin{lemma}
\label{lemma:lemma8}
It takes $O(n)$ time to expand a $t$-director for an $n$-vertex graph
$G$ with $t=O(1)$ and $\eta(G)=O(1)$ into a $t+1$-director for $G$.
\end{lemma}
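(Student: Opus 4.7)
The plan is to construct $D^+$ by adjoining to $D$ an auxiliary $O(1)$-orientation $D'$ of $G$ derived from a degeneracy ordering, so that $D^+ = D \cup D'$ is still an $O(1)$-orientation and covers the new pairs at distance exactly $t+1$.

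First, I would compute a degeneracy ordering $\sigma=(v_1,\ldots,v_n)$ of $G$ in $O(n)$ time by iteratively peeling a vertex of $O(1)$ minimum degree from $G\cup G^r$, which is guaranteed to exist in every induced subgraph by Lemma~\ref{lemma:lemma7}. Using $\sigma$, I would orient each edge of $G\cup G^r$ from its later endpoint to its earlier one, producing an orientation $D'$ in which the out-neighbors of each vertex $v_i$ are exactly its $O(1)$ earlier neighbors. Then $D^+=D\cup D'$ is an $O(1)$-orientation computable in $O(n)$ time.

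To verify the $(t+1)$-director property, the $t$-director property of $D$ already covers pairs at distance at most $t$, so only pairs $(u,v)$ with $d_G(u,v)=t+1$ remain. Given any shortest $uv$-path $P=u=x_0,x_1,\ldots,x_{t+1}=v$, the sub-pair $(u,x_t)$ has distance $t$, so by the $t$-director property of $D$ there is a $D$-directive $ux_t$-path $Q$ with some meeting vertex $w$. Appending the edge $\{x_t,v\}$ to $Q$ gives a shortest $uv$-path that is $D^+$-directive with the same meeting vertex $w$ whenever $\vec{vx_t}\in D^+$; by construction, this holds in $D'$ whenever $\sigma(x_t)<\sigma(v)$. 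The symmetric argument applied to the sub-pair $(x_1,v)$ handles the case $\sigma(x_1)<\sigma(u)$.

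The hard part is the residual case where both $\sigma(x_1)>\sigma(u)$ and $\sigma(x_t)>\sigma(v)$ for every shortest $uv$-path, so neither endpoint's first- nor last-hop neighbor on any such path is earlier in $\sigma$. My plan for this case is to choose a shortest $uv$-path $P^\star$ and use the $\sigma$-minimum vertex $w^\star\in V(P^\star)$ as the meeting vertex, then carry out a rerouting argument that replaces each $\sigma$-increasing step on the two sub-paths $u\to w^\star$ and $v\to w^\star$ by a $\sigma$-decreasing detour through earlier neighbors (which must exist locally because $\eta(G)=O(1)$ forces bounded degeneracy). Showing that such a rerouting preserves shortest-path length and leaves the sub-paths with the correct $D^+$-orientation is the combinatorial core of the lemma, and is where the argument most heavily leans on the Hadwiger-number hypothesis.
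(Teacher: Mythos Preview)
Your proposal has a genuine gap in the residual case, and the ``rerouting'' plan cannot work as stated.

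First, a small but telling inconsistency: if $\sigma$ is produced by iterated min-degree deletion, then each $v_i$ has $O(1)$ neighbours among the vertices removed \emph{after} it (those still present when $v_i$ is peeled), not among those removed before it. Orienting every edge from the later endpoint to the earlier one, as you write, gives unbounded out-degree (the centre of a star points to all leaves). This is presumably a slip and is fixable by reversing the convention, so the case split also flips; I mention it only because it signals that the role of $\sigma$ has not been pinned down.

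The substantive problem is the residual case. You propose to take the $\sigma$-minimum vertex $w^\star$ on some shortest $uv$-path and then ``replace each $\sigma$-increasing step \ldots\ by a $\sigma$-decreasing detour through earlier neighbours.'' But a detour replaces one edge by at least two, so the resulting walk has length strictly greater than $d_G(u,v)$ and can never be $D^+$-directive (by definition a $D^+$-directive $uv$-path is a shortest $uv$-path). In a tree the shortest path is unique and no rerouting of any kind exists, so the scheme collapses immediately whenever the residual configuration arises there. Bounded degeneracy only bounds the number of \emph{later} neighbours of each vertex; it gives no leverage for producing alternative shortest paths whose $\sigma$-profile is monotone toward $w^\star$. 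You flag this step as ``the combinatorial core of the lemma'' without an argument, and indeed this is exactly where the proof is missing.

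For comparison, the paper does not adjoin a single orientation oblivious to $D$. It builds enhancers $E_1\subseteq\cdots\subseteq E_{t+1}$ tailored to $D$: at stage $j$ it forms an auxiliary graph $H$ on $V(G)$ whose edges record the unresolved distance-$(j{+}1)$ pairs via specific ``$D$-pivoted'' witness paths $P(u,v)$, shows that $H$ decomposes into $O(1)$ graphs each of which is a minor of $G$ (using an $O(1)$-colouring of the power graph $D_j^{2j-1}$), hence $H$ admits an $O(1)$-orientation $D_H$, and then adds to $E_{j+1}$ the terminal edge of each $P(u,v)$ on the side dictated by $D_H$. The edges added therefore depend on $D$ and on the actual shortest paths of $G$, which is precisely what your degeneracy-only construction lacks.
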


We first reduce Lemma~\ref{lemma:lemma5} to Lemma~\ref{lemma:lemma8}
via Lemma~\ref{lemma:lemma7} and then prove Lemma~\ref{lemma:lemma8}.

\begin{proof}[Proof of Lemma~\ref{lemma:lemma5}]
By $t=O(1)$ and Lemma~\ref{lemma:lemma8}, it suffices to show that a
$1$-director $D$ for the $n$-vertex input graph $G$ with
$\eta(G)=O(1)$ can be obtained in $O(n)$ time: For each $i\in [n]$,
let $u_i$ be a vertex of $G_i=G-\{u_j:j\in [i-1]\}$ with minimum
$|N_G(u_i)|$, which is $O(1)$ by Lemma~\ref{lemma:lemma7}.  It takes
$O(n)$ time to obtain the $O(1)$-orientation $D=\{\vec{u_iv_i}: v_i\in
N_{G_i}(u_i), i\in[n]\}$ for $G$.
\end{proof}

\begin{proof}[Proof of Lemma~\ref{lemma:lemma8}]
Let $D$ be a $t$-director for $G$.  
A $uv$-path $P$ of $G$ with $|E(P)|\geq 1$ is
\emph{$D$-pivoted} if
\[
d_{P\cap D^r}(u,x)+d_{P\cap D}(x,v)=d_G(u,v)\geq 1
\] 
holds for the neighbor $x$ of $u$ in $P$.  See
Figure~\ref{figure:figure4}(b) for an illustration.  A $D$-pivoted
$uv$-path of $G$ is a shortest $uv$-path of $G$ but not vice versa.  A
vertex pair $(u,v)$ of $G$ is \emph{$D$-pivoted} if $G$ contains a
$D$-pivoted $uv$-path.  A graph $E_j$ is a \emph{$j$-enhancer} for $D$
with $j\in [t+1]$ if the next \emph{Conditions~E} hold:
\begin{enumerate}[label=\emph{E\arabic*:}, ref={E\arabic*}]
\item 
\label{condition:E1}
$D_j=D\cup E_j$ remains an $O(1)$-orientation for $G$ and hence a
$t$-director for $G$.

\item 
\label{condition:E2}
Each $D$-pivoted vertex pair $(u,v)$ of $G$ with $d_G(u,v)\in [j]$
such that $G\cap D_j$ does not contain any edge $\vec{ux}$ with
$d_G(x,v)=j-1$ admits a $D$-pivoted $uv$-path $P$ of $G$ with
$P-u\subseteq D_j\cap D_j^r$.
\end{enumerate}
See Figure~\ref{figure:figure4}(c)
for an illustration. 

Claim~1: It takes $O(n)$ time to expand a $j$-enhancer for $D$ with
$j\in[t]$ into a $j+1$-enhancer for $D$.

By Claim~1 and $t=O(1)$, it suffices to ensure that $D_{t+1}=D\cup
E_{t+1}$ for a $t+1$-enhancer for $D$ is a $t+1$-director for $G$,
since the empty graph is a $1$-enhancer for $D$.

Assume for contradiction that a vertex pair $(u,v)$ of $G$ with
$d_G(u,v)\in [t+1]$ is not $D_{t+1}$-directive.  By
Condition~\ref{condition:E1} for $E_{t+1}$, we have $d_G(u,v)=t+1$,
implying an edge $\vec{ux}$ of $G$ with $d_G(x,v)=t$.  $D$ does not
contain $\vec{ux}$ or else the union of $\vec{ux}$ and a $D$-directive
$xv$-path of $G$ is a $D_{t+1}$-directive $uv$-path of $G$.  Since $D$
is an orientation for $G$ that does not contain $\vec{ux}$, $D^r$
contains $\vec{ux}$.  Let $Q$ be a $D$-directive $xv$-path of $G$.
$D$ contains $Q$ or else the union of a $D$-directive $uy$-path for
the neighbor $y$ of $v$ in $Q$ and the edge $\vec{yv}$ is a
$D_{t+1}$-directive $uv$-path of $G$.  Hence, $\vec{ux}\cup Q$ is a
$D$-pivoted $uv$-path of $G$, implying that $(u,v)$ is a $D$-pivoted
vertex pair of $G$ with $d_G(u,v)=t+1$.  $G\cap D_{t+1}$ does not
contain any edge $\vec{ux}$ with $d_G(x,v)=t$ or else the union of
$\vec{ux}$ and a $D$-directive $xv$-path of $G$ is a
$D_{t+1}$-directive $uv$-path of $G$.  By Condition~\ref{condition:E2}
for $E_{t+1}$, $G$ contains a $D$-pivoted $uv$-path $P$ with
$P-u\subseteq D_{t+1}\cap D_{t+1}^r$, implying that $P$ is a
$D_{t+1}$-directive $uv$-path of $G$, contradiction.

The rest of the proof ensure Claim~1 by showing a graph $E$ obtainable
in $O(n)$ time from a $j$-enhancer $E_j$ for $D$ such that
$E_{j+1}=E_j\cup E$ is a $j+1$-enhancer for $D$.  We rely on the
following graph $H$ and an $O(1)$-orientation $D_H$ for $H$: Let $H$
be the graph on $V(G)$ consisting of the edges $\vec{uv}$ such that
$(u,v)$ are vertex pairs of $G$ with $d_G(u,v)=j+1$ that admit
$D$-pivoted $uv$-paths $P$ of $G$ with $P-\{u,v\}\subseteq D_j\cap
D_j^r$.  We show that $H$ can be obtained in $O(n)$ time.  Since $D$
is a $t$-director for $G$ with $j\leq t=O(1)$, it takes $O(1)$ time to
determine for each vertex pair $(u,v)$ of $G$ whether $d_G(u,v)\geq
j+1$.  Since $D$ is an $O(1)$-orientation, each vertex $x$ of $G$ can
be the second vertex of $O(1)$ length-$j+1$ paths of $G$ that are
$D_j$-pivoted.  Thus, it takes $O(n)$ time to obtain the set $\cP$ of
all $D$-pivoted $uv$-paths $P$ of $G$ satisfying $d_G(u,v)=j+1$ and
$P-\{u,v\}\subseteq D_j\cap D_j^r$. We have $|\cP|=O(n)$.  Since $H$
consists of the edges $\vec{uv}$ admitting $uv$-paths in $\cP$, it
takes $O(n)$ time to construct $H$.

Claim~2: It takes $O(n)$ time to obtain an $O(1)$-orientation $D_H$ for $H$.

\begin{figure}
\centerline{\scalebox{0.5}{\includegraphics{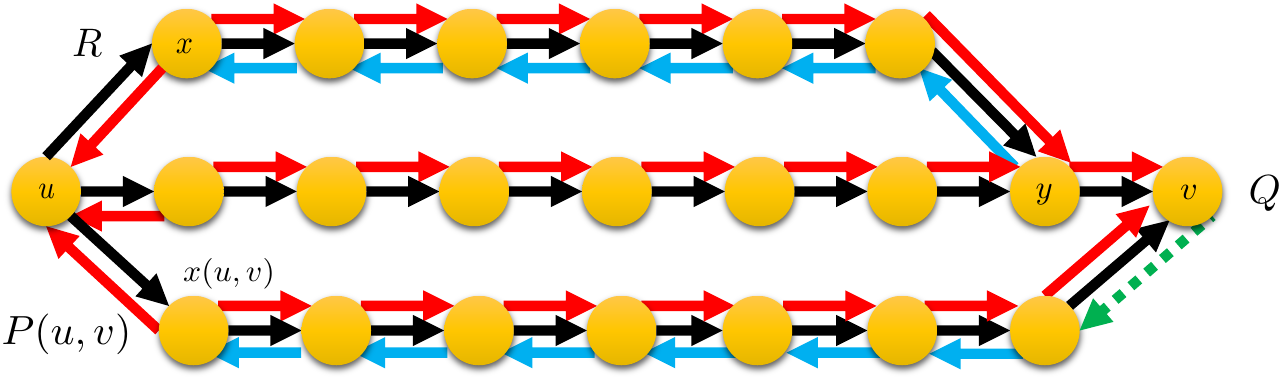}}}
\caption{An illustration for the proof of Lemma~\ref{lemma:lemma8}.
  The black, red, blue, and dotted edges are in $G$, $D$, $E_j$, and
  $E$, respectively.  $Q$ is the $D$-pivoted $uv$-path of $G$ in the
  middle.  $R$ is the $D$-pivoted $uy$-path of $G$ above $Q$.
  $P(u,v)$ is the $D$-pivoted $uv$-path of $G$ is below $Q$.}
\label{figure:figure5}
\end{figure}

For each edge $\vec{uv}$ of $H$, let $P(u,v)$ be an arbitrary fixed
$uv$-path in $\cP$ and let $x(u,v)$ be the neighbor of $u$ in
$P(u,v)$.  Define $E$ as the $O(n)$-time obtainable subgraph of $G\cup
G^r$ consisting of
\begin{itemize}
\item 
the incident edge of $u$ in $P(u,v)$ for each edge $\vec{uv}$ of
$H\cap D_H$ and

\item 
the incident edge of $v$ in $(P(u,v))^r$ for each edge $\vec{uv}$ of $H\cap
D_H^r$.
\end{itemize}
Since $D_H$ is an $O(1)$-orientation, so is $E$, implying that
$D_{j+1}=D_j\cup E$ remains an $O(1)$-orientation for
$G$. Condition~\ref{condition:E1} for $E_{j+1}$ holds.  To see
Condition~\ref{condition:E2} for $E_{j+1}$, let $(u,v)$ be a
$D$-pivoted vertex pair of $G$ with $d_G(u,v)\in [j+1]$ such that
$G\cap D_{j+1}$ does not contain any edge $\vec{ux}$ with
$d_G(x,v)=j$.  By Condition~\ref{condition:E2} for $E_j$ and
$E_j\subseteq E_{j+1}$, it suffices to focus on the case with
$d_G(u,v)=j+1$.  Let $Q$ be a $D$-pivoted $uv$-path of $G$.  See
Figure~\ref{figure:figure5}.  Let $\vec{yv}$ be the incident edge of
$v$ in $Q$.  Since $Q-v$ is a $D$-pivoted $uy$-path of $G$, $(u,y)$ is
a $D$-pivoted vertex pair of $G$ with $d_G(u,y)=j$.  $G\cap D_j$ does
not contain any edge $\vec{ux}$ with $d_G(x,y)=j-1$ or else $\vec{ux}$
is an edge of $G\cap D_{j+1}$ with $d_G(x,v)=j$.  By
Condition~\ref{condition:E2} for $E_j$, there is a $D$-pivoted
$uy$-path $R$ of $G$ such that $D_j\cap D_j^r$ contains $R-u$ or
$R-y$.  Observe that $D_j\cap D_j^r$ does not contain $R-y$ or else
the incident edge $\vec{ux}$ of $u$ in $R$ is an edge of $G\cap D_j$
with $d_G(x,y)=j-1$.  Thus, $D_j\cap D_j^r$ contains $R-u$, implying
that the union $P$ of the $D$-pivoted $uy$-path $R$ and the edge
$\vec{yv}$ of $G\cap D$ is a $D$-pivoted $uv$-path of $G$ with
$P-\{u,v\}\subseteq D_j\cap D_j^r$.  Hence, $\vec{uv}$ is an edge of
$H$.  By $d_G(x(u,v),v)=j$, the incident edge of $u$ in $P(u,v)$ is
not in $G\cap D_{j+1}$.  Thus, $E$ contains the incident edge of $v$
in $(P(u,v))^r$, implying that $P(u,v)$ is a $D$-pivoted $uv$-path of
$G$ with $P(u,v)-u\subseteq D_{j+1}\cap D_{j+1}^r$.
Condition~\ref{condition:E2} for $E_{j+1}$ holds.

It remains to prove Claim~2.  We need an $O(n)$-time obtainable
$O(1)$-coloring of the graph $F=D_j^{2j-1}$ to construct $D_H$.  That
is, $F$ consists of the edges $\vec{uv}$ with $d_{D_j}(u,v)\in
[2j-1]$.  Since $D_j$ is an $O(1)$-orientation, so is $F$ by $j\leq
t=O(1)$.  Let each $v_i$ with $i\in[n]$ be an arbitrary vertex of
$F_i=F-\{v_1,\ldots,v_{i-1}\}$ that minimizes $|N_{F_i}(v_i)|$. Since
each $F_i$ with $i\in[n]$ is an $O(1)$-orientation, we have
$|N_{F_i}(v_i)|=O(1)$. Thus, $F$ admits an $O(n)$-time obtainable
$O(1)$-coloring via assigning for each $v_i$ with $i\in[n]$ a color
distinct from those of vertices in $N_{F_i}(v_i)$.  Based on the
$O(1)$-coloring of $F$, it takes $O(n)$ time to decompose $H$ into
$O(1)$ edge-disjoint subgraphs $H_k$ of $H$ whose union is $H$ such
that the vertices $x(u,v)$ for distinct edges $\vec{uv}$ in the same
subgraph $H_k$ are distinct vertices having the same color in $F$.  If
$\vec{uv}$ and $\vec{yz}$ are distinct edges of the same subgraph
$H_k$ of $H$, then $P(u,v)$ does not intersect $V(P(y,z))\setminus
\{y,z\}$ or else $d_{D_j}(x(u,v),x(y,z))\in[2j-1]$ contradicts that
$x(u,v)$ and $x(y,z)$ have the same color in $F$.  Hence, each $H_k$
can be obtained from the union of the paths $P(u,v)$ of $G$ over all
edges $\vec{uv}\in E(H_k)$ by contracting each $P(u,v)$ into
$\vec{uv}$, implying that $H_k$ is a minor of $G$.  By $\eta(H_k)\leq
\eta(G)=O(1)$, it takes $O(n)$ time to obtain an $O(1)$-orientation
$D_{H_k}$ for $H_k$.  The union $D_H$ of the $O(1)$ graphs $D_{H_k}$
is an $O(n)$-time obtainable $O(1)$-orientation for $H$.
\end{proof}

\subsubsection{Proving Lemma~\ref{lemma:lemma6}}
\label{subsubsection:subsubsection3.4.2}

\begin{figure}
\centerline{\scalebox{0.5}{\includegraphics{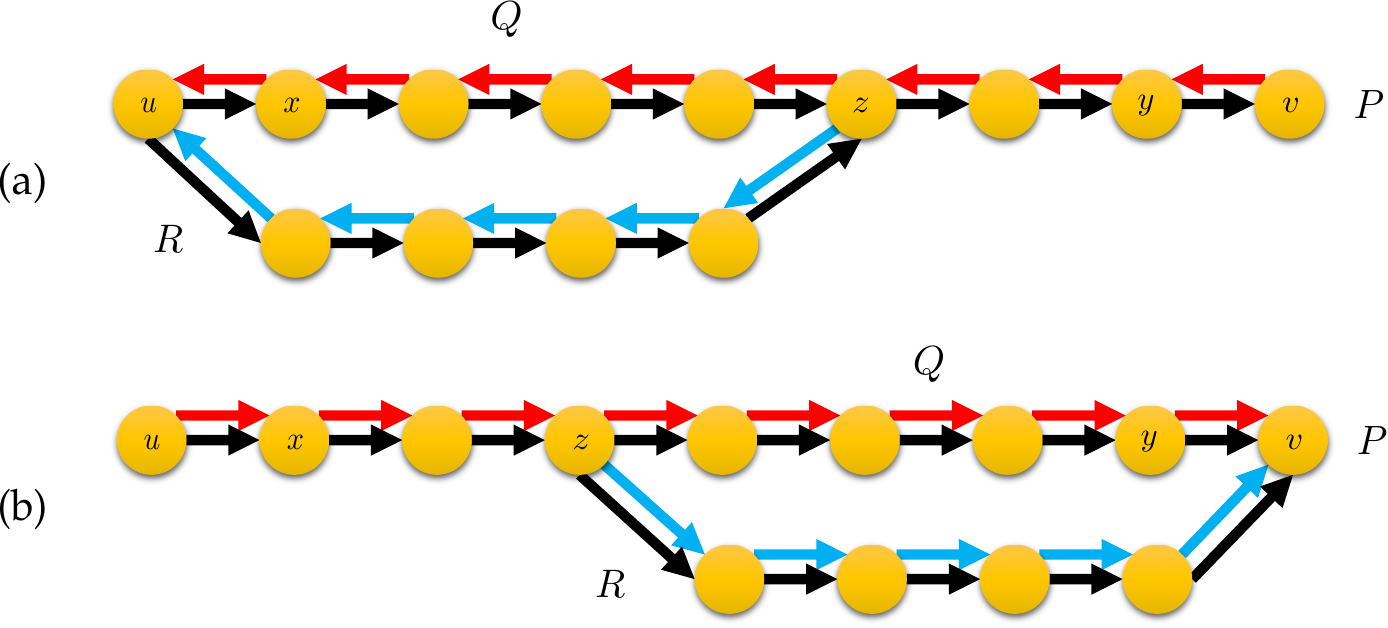}}}
\caption{An illustration for the proof of Statements~(a) and (b) in
  the proof of Lemma~\ref{lemma:lemma6}. The black edges belong to
  $H$. The red edges belong to $D'_H$.  $P$ is the $D'_H$-directive
  $uv$-path of $G$.  (a) The blue edges belong to $D_{H_i}$.  $Q$ is
  the $uz$-path of $P$. $R$ is the $D_{H_i}$-directive $uz$-path of
  $H_i$ below $P$.  (b) The blue edges belong to $D_{H_j}$.  $Q$ is
  the $zv$-path of $P$. $R$ is the $D_{H_j}$-directive $zv$-path of
  $H_j$ below $P$.}
\label{figure:figure6}
\end{figure}

\begin{proof}[Proof of Lemma~\ref{lemma:lemma6}]
Observe first that $D_H$ is an $O(1)$-orientation, since
$(U_0,\ldots,U_p)$ is a partition of $H$ and the $u$-out edges in
$D_H=D_0\cup \cdots \cup D_p$ for each vertex $u\in U_i$ with
$i\in[0,p]$ are $u$-out edges in the two $O(1)$-orientations $D'_H$
and $D_{H_i}$.  We show $D_H\cup D_H^r=H\cup H^r$ to further ensure
that $D_H$ is an $O(1)$-orientation for $H$.  Since each edge of $D_H$
belongs to an $O(1)$-orientation for a subgraph of $H$, we have
$D_H\subseteq H\cup H^r$.  Each edge of $H[U_i]$ with $i\in [0,p]$ is
in $D_i\cup D_i^r\subseteq D_H\cup D_H^r$.  An edge of $H$ not in any
$H[U_i]$ with $i\in[0,p]$ has to be an edge $\vec{uv}$ or $\vec{vu}$
of $H_i$ with $i\in [p]$ between a vertex $u\in U_i$ and a vertex
$v\in U_0\subseteq W_0$.  If $u\in W_0$, then $D'_H[\{u,v\}]\subseteq
D_0\subseteq D_H$.  If $u\notin W_0$, then $D_{H_i}$ does not contain
the edge $\vec{vu}$. Since $D_{H_i}$ is an orientation for $H_i$, we
have $\vec{uv}\in E(D_{H_i})$, implying that $\vec{uv}$ is an edge of
$D_i\subseteq D_H$.  Hence, $H\subseteq D_H\cup D_H^r$. Therefore,
$D_H$ is an $O(1)$-orientation for $H$.

It remains shows that each vertex pair $(u,v)$ of $H$ with
$d_H(u,v)\in [t]$ is $D_H$-directive.  Assume for contradiction that
$(u,v)$ is a non-$D_H$-directive pair of vertices $u\in U_i$ and $v\in
U_j$ with $i\in [0,p]$ that minimizes $d_H(u,v)\in [t]$.  The
following \emph{Conditions~D} follow from the minimality of
$d_H(u,v)$:
\begin{enumerate}[label=\emph{D\arabic*:}, ref={D\arabic*}]
\item 
\label{condition:D1}
$H\cap D_H$ contains no edge $\vec{uw}$ with $d_H(w,v)=d_H(u,v)-1$ or
else the union of $\vec{uw}$ and a $D_H$-directive $wv$-path of $H$ is
a $D_H$-directive $uv$-path of $H$.

\item 
\label{condition:D2}
$H\cap D_H^r$ contains no edge $\vec{wv}$ with $d_H(u,w)=d_H(u,v)-1$
or else the union of a $D_H$-directive $uw$-path of $H$ and $\vec{wv}$
is a $D_H$-directive $uv$-path of $H$.
\end{enumerate}
Let $P$ be a $D'_H$-directive $uv$-path of $H$, implying that $D'_H$
contains the incident edge $\vec{ux}$ of $u$ in $P$ or the incident
edge $\vec{vy}$ of $v$ in $P^r$.  If $i=j=0$, then $D_0\subseteq D_H$
contains $\vec{ux}$ or $\vec{vy}$, violating
Condition~\ref{condition:D1} with $w=x$ or
Condition~\ref{condition:D2} with $w=y$.  Thus, $i\in [p]$ or $j\in
[p]$.  However, we show as follows that (a) $i\in[p]$ implies
$j\in[p]$ and $\vec{ux}\notin E(D'_H)$ and (b) $j\in [p]$ implies
$i\in[p]$ and $\vec{vy}\notin E(D'_H)$.  This leads to $E(D'_H)\cap
\{\vec{ux},\vec{yv}\}=\varnothing$, contradiction.
 
Statement~(a): $i\in [p]$ implies $j\in[p]$ and $\vec{ux}\notin
E(D'_H)$.  Let the $uz$-path $Q$ be the maximal prefix of $P$ with
$Q\subseteq H_i$.  Let $R$ be a $D_{H_i}$-directive $uz$-path of
$H_i$. See Figure~\ref{figure:figure6}(a).  We have $R^r\subseteq
D_{H_i}$ or else $D_i\subseteq D_H$ contains the incident edge
$\vec{uw}$ of $u$ in $R$, violating Condition~\ref{condition:D1}.  We
have $z\in U_0$: If $z$ were not in $U_0$, then we have $v=z\in U_i$
by the maximality of $Q$, implying $P=Q\subseteq H_i$ and
$|E(R)|=d_{H_i}(u,v)=d_H(u,v)\in [t]$.  By $R^r\subseteq D_{H_i}$, the
incident edge $\vec{vw}$ of $v$ in $R^r$ is in $D_i\subseteq D_H$,
violating Condition~\ref{condition:D2}.  Thus, $z\in U_0$. By
$R^r\subseteq D_{H_i}$, we have $u\in W_0$, implying $\vec{ux}\notin
E(D'_H)$ or else $\vec{ux}\in E(D_0)\subseteq E(D_H)$ violates
Condition~\ref{condition:D1} with $w=x$.  By $\vec{ux}\notin E(D'_H)$,
we have $P^r\subseteq D'_H$. Thus, $j\ne 0$ or else $\vec{vy}\in
E(D'_H)$ implies $\vec{vy}\subseteq E(D_0)\subseteq E(D_H)$, violating
Condition~\ref{condition:D2} with $w=y$.

Statement~(b): $j\in [p]$ implies $i\in[p]$ and $\vec{vy}\notin
E(D'_H)$.  Let the $zv$-path $Q$ be the maximal suffix of $P$ with
$Q\subseteq H_j$.  Let $R$ be a $D_{H_j}$-directive $zv$-path of
$H_j$.  See Figure~\ref{figure:figure6}(b).  We have $R\subseteq
D_{H_j}$ or else $D_j\subseteq D_H$ contains the incident edge
$\vec{vw}$ of $v$ in $R^r$, violating Condition~\ref{condition:D2}.
We have $z\in U_0$: If $z$ were not in $U_0$, then we have $u=z\in
U_j$ by the maximality of $Q$, implying $P=Q\subseteq H_j$ and
$|E(R)|=d_{H_j}(u,v)=d_H(u,v)\in[t]$. By $R\subseteq D_{H_j}$, the
incident edge $\vec{uw}$ of $u$ in $R$ is in $D_j\subseteq D_H$,
violating Condition~\ref{condition:D1}. Thus, $z\in U_0$. By
$R\subseteq D_{H_j}$, we have $v\in W_0$, implying $\vec{vy}\notin
E(D'_H)$ or else $\vec{vy}\in E(D_0)\subseteq E(D_H)$ violates
Condition~\ref{condition:D2} with $w=y$.  By $\vec{vy}\notin E(D'_H)$,
we have $P\subseteq D'_H$. Thus, $i\ne 0$ or else $\vec{ux}\in
E(D'_H)$ implies $\vec{ux}\in E(D_0)\subseteq E(D_H)$ violates
Condition~\ref{condition:D1} with $w=x$.
\end{proof}

\section{Concluding remarks}
\label{section:section4}
We propose to base an encoding scheme on multiple classes of graphs to
exploit the inherent structures of individual graphs. As the first
nontrivial such example, we present an $\cF$-optimal encoding scheme
$A^*$ for a family $\cF$ of an infinite number of classes.
Specifically, $A^*$ takes an $n$-vertex $k$-clique-minor-free graph
$G$ with $k=O(1)$ and produces a $\cC$-succinct encoded string $X$ in
deterministic $O(n)$ time for each nontrivial quasi-monotone class
$\cC$ of graphs that contains $G$.  $A^*$ can decode $X$ back to $G$
in deterministic $O(n)$ time. This means that $A^*$ automatically
exploits an infinite number of possible nontrivial quasi-monotone
structures of $G$ to encode $G$ as compactly as possible. $A^*$ does
not require any embedding of $G$ or any recognition algorithm or other
explicit or implicit knowledge about the member classes $\cC$ of
$\cF$.  $A^*$ also supports fundamental queries in $O(1)$ time per
output.  Moreover, $A^*$ accepts additional information like an
$O(1)$-coloring or an genus-$O(1)$ embedding of $G$ that can be
decoded back together with $G$ by $A^*$ and answered by the query
algorithms of $A^*$.  It is of interest to see if our $\cF$-optimal
encoding scheme $A^*$ can be extended so that (i) the ground $\bigcup
\cF$ can be a proper superclass of the graphs having bounded Hadwiger
numbers, (ii) the member classes $\cC$ of $\cF$ can be more refined to
beyond quasi-monotonicty, or (iii) efficient updates can be supported
for the input graph $G$ and its equipped information.

\bibliographystyle{hilabbrv}
\bibliography{slim}
\end{document}